\pdfoutput=1

% Class
\documentclass[12pt, a4paper]{article} 

% Packages
% PACKAGES
\usepackage[german, english]{babel}
\usepackage[latin1]{inputenc}
\usepackage[T1]{fontenc}
\usepackage{lmodern}
\usepackage{amsmath}    
\usepackage{amsthm}  
\usepackage{amsfonts}     
\usepackage{amssymb}
\usepackage{array}
\usepackage[affil-it]{authblk}
\usepackage{caption}
\usepackage{color}
\usepackage{courier}
\usepackage{diagbox}
\usepackage{dsfont}			
\usepackage{enumerate}			 
\usepackage{epsfig}
\usepackage{float}      
\usepackage[top=2cm,left=2cm,right=2cm,bottom=2cm]{geometry} 
\usepackage{graphicx}   
\usepackage{cite}
\usepackage{listings}        
\usepackage{lastpage} 
\usepackage{mdframed}
\usepackage{multicol}  
\usepackage{newclude}
\usepackage{placeins}
\usepackage{scrpage2}
\usepackage{setspace}
\usepackage{shuffle}
\usepackage{subcaption}
\usepackage[tc]{titlepic}
\usepackage{tikz}
\usepackage{tikz-cd}
\usetikzlibrary{arrows,arrows.meta,shapes,trees,decorations.pathmorphing,decorations.markings,backgrounds,calc,positioning}
\usepackage{titlesec}
\usepackage{url}
\usepackage{varwidth}
\usepackage{verbatim}
\usepackage{wrapfig}
\usepackage{xcolor}
\usepackage{CJK}
\usepackage{hyperref}

% User definitions
% NEW COMMANDS AND ENVIRONMENTS

% commands to include propagators and corollas in a math environment
\newcommand*{\fermion}{ \includegraphics[height=0.5\baselineskip]{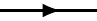} }
\newcommand*{\photon}{ \includegraphics[height=0.5\baselineskip]{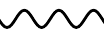} }
\newcommand*{\scalar}{ \includegraphics[height=0.5\baselineskip]{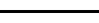} }
\newcommand*{\vertexQED}{ \raisebox{-0.3\baselineskip}{ \includegraphics[height=\baselineskip]{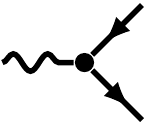} } }

% command for a double slash // closer together
\newcommand{\dslash}{\mathbin{
  \mathchoice{/\mkern-6mu/}% \displaystyle
    {/\mkern-6mu/}% \textstyle
    {/\mkern-5mu/}% \scriptstyle
    {/\mkern-5mu/}}}% \scriptscriptstyle

% some capital greek letters
\newcommand{\Chi}{ \mathrm{X} }

% proper looking definition symbols in both directions
\newcommand*{\defeq}{\mathrel{\vcenter{\baselineskip0.5ex \lineskiplimit0pt
                     \hbox{\scriptsize.}\hbox{\scriptsize.}}}%
                     =}

% math symbol for the trace of a matrix

\DeclareMathOperator{\tr}{tr}

% left / right angles in decent size, e.g. for vector spaces spanned by stuff < a1, ... >
\makeatletter
\newsavebox{\@brx}
\newcommand{\llangle}[1][]{\savebox{\@brx}{\(\m@th{#1\langle}\)}%
	\mathopen{\copy\@brx\kern-0.5\wd\@brx\usebox{\@brx}}}
\newcommand{\rrangle}[1][]{\savebox{\@brx}{\(\m@th{#1\rangle}\)}%
	\mathclose{\copy\@brx\kern-0.5\wd\@brx\usebox{\@brx}}}
\makeatother

% abbreviation of align environment
\newcommand{\al}[2][]{\begin{align#1}#2\end{align#1}}

% Theorems and other mathematical stuff for sections
\newtheorem{theo}{Theorem}[section]
\newtheorem{exam}[theo]{Example}
\newtheorem{lem}[theo]{Lemma}
\newtheorem{defi}[theo]{Definition}
\newtheorem{prop}[theo]{Proposition}

\newtheorem{rem}[theo]{Remark}

% standardized reference to equations, figures, etc.
\newcommand{\reffig}[1]{fig. \ref{#1}}
\newcommand{\refeq}[1]{eq. (\ref{#1})}

% blank footnote
\let\svthefootnote\thefootnote
\newcommand\blankfootnote[1]{%
  \let\thefootnote\relax\footnotetext{#1}%
  \let\thefootnote\svthefootnote%
}

% Layout
\geometry{top=3cm,left=3cm,right=2.5cm,bottom=3.5cm}

% DOCUMENT
\begin{document}

% TITLEPAGE

	% Metadata
	\title{\textbf{New graph polynomials in parametric QED Feynman integrals}}
	\date{}
	\author{Marcel Golz}
	\affil{Institut f\"ur Physik, Humboldt-Universit\"at zu Berlin\\ Newtonstraße 15, D-12489 Berlin, Germany}
	\maketitle
	\thispagestyle{empty}

	% Abstract
	\begin{abstract}
		In recent years enormous progress has been made in perturbative quantum field theory by applying methods of algebraic geometry to parametric Feynman integrals for scalar theories. The transition to gauge theories is complicated not only by the fact that their parametric integrand is much larger and more involved. It is, moreover, only implicitly given as the result of certain differential operators applied to the scalar integrand $\exp(-\frac{\Phi_{\Gamma}}{\Psi_{\Gamma}})$, where $\Psi_{\Gamma}$ and $\Phi_{\Gamma}$ are the Kirchhoff and Symanzik polynomials of the Feynman graph $\Gamma$. In the case of quantum electrodynamics we find that the full parametric integrand inherits a rich combinatorial structure from $\Psi_{\Gamma}$ and $\Phi_{\Gamma}$. In the end, it can be expressed explicitly as a sum over products of new types of graph polynomials which have a combinatoric interpretation via simple cycle subgraphs of $\Gamma$.
	\end{abstract}

	% table of contents	
	\tableofcontents

	% email footnote
	\blankfootnote{\texttt{mgolz@physik.hu-berlin.de}}

% MAIN PART
\newpage
% INTRODUCTION
	\section{Introduction}
	The parametric version of Feynman integrals has long been an extremely useful tool in the study of perturbative quantum field theory \cite{Nambu_1957_ParaRep, BergereZuber_1974_ParaRenorm, CvitanovicKinoshita_1974_FRPS, Panzer_2014_Algorithms, Panzer_2014_ManyScales}. Moreover, over the last decade a number of fascinating breakthroughs have unveiled deep connections to algebraic geometry and number theory, and have motivated mathematicians to study Feynman integrals, their periods, as well as connections to combinatorics and geometry \cite{BlochEsnaultKreimer_2006_Motives, Brown_2010_Periods, Brown_2009_Massless, BrownSchnetzYeats_2014_C2,BCDYeats_2015_ForbMinor, BrownSchnetz_2012_K3, AluffiMarcolli_2010_DetHyp}. However, most of this takes place in the realm of scalar quantum field theories due to the complications that the tensor structure of gauge theories brings to Feynman integrals. Not only does the parametric integrand in quantum electrodynamics (the simplest gauge theory) contain a number of (traces of) Dirac matrices, which we will discuss in a separate article \cite{Golz_2017_Traces}. It also contains a complicated rational function in the Schwinger parameters, momenta, metric tensors etc. in front of the usual integrand of a scalar Feynman integral. While the rough structure of this function has been known for a long time\footnote{It follows very directly from the Leibniz rule of differentiation. See also \cite{CvitanovicKinoshita_1974_FRPS}.} to be a certain sum over products of polynomials that are somehow related to derivatives of the second Symanzik polynomial, there has been no direct combinatorial interpretation of what these polynomials are. In this article we give such an interpretation of the form $\sum_C \pm \Psi_{\Gamma\dslash C}$, where the sum is over certain cycle subgraphs of the Feynman graph $\Gamma$ and $\Psi_{\Gamma\dslash C}$ is its Kirchhoff polynomial after contraction of that cycle.\\
	
	Moreover, we believe that combining the results of this article with those of \cite{Golz_2017_Traces} in future work will allow us to give a version of a parametric QED Feynman integrand that is essentially a single scalar integrand and includes a number of intricate cancellations that reduce the size of the integrand at higher loop orders by several orders of magnitude compared to the naive version \refeq{eq_QEDint}. A thus simplified Feynman integral will be much easier to handle, for example when trying to extend Brown's and Kreimer's parametric renormalization procedure \cite{BrownKreimer_2013_AnglesScales} (recently applied to great effect in \cite{KompanietsPanzer_2016}) to QED, and allow for a better understanding of non-scalar Feynman amplitudes. In particular, we see this as a first step in answering a number of long standing questions in QED, for example the cancellation of transcendental terms in the beta function \cite{Rosner_1966_QED, GKLS_1991_QED, BroadhurstDelbourgoKreimer_1996_Unknotting}. Finally, since most of the combinatorics underlying our result is independent of the specific case of QED, it should be possible to generalise the insights gained in this article to the non-abelian case by studying the corolla differential of \cite{KreimerSarsSuijlekom_2013_QuantGauge}.\\
	
	We begin by recapitulating some basic graph theory and the definitions of the Kirchhoff and Symanzik polynomials in sections \ref{subsec_graphs} and \ref{subsec_classpoly}. For more detail we suggest the reader consult the excellent review article \cite{BognerWeinzierl_2010_FeynmanPolynomials} or the classic book \cite{Nakanishi_1971_GraphFeynman}.  Building on that we define our new \textit{cycle polynomials} and discuss a number of examples and properties in section \ref{subsec_cyclepol}. In particular we would like to highlight the three identities proved in the lemmata \ref{lem_cyclepol_decomp}-\ref{lem_bondcycle}, since they are the fundamental building blocks for the proof of our main result and also quite fascinating in their own right. After briefly introducing parametric Feynman integrals for non-experts and discussing the peculiarities of quantum electrodynamics in section \ref{sec_FI} we are ready to state and prove theorem \ref{theo_main_numPoly}.

% GRAPHS AND GRAPH POLYNOMIALS
\section{Graphs and graph polynomials}

% Graphs, subgraphs and Feynman graphs
\subsection{Graphs, subgraphs and Feynman graphs}\label{subsec_graphs}

	% DEF graph, subgraph
	A \textit{graph} $G$ is an ordered pair $(V_G, E_G)$ of the set of \textit{vertices} $V_G = \{ v_1, \dotsc, v_{|V_G|}\}$ and the set of \textit{edges} $E_G = \{ e_1, \dotsc, e_{|E_G|}\}$, together with a map $\partial: E_G \to V_G \times V_G$, which is usually realised by drawing the graph in the plane. We will need our graphs to be directed, however as usual the particular choice of direction for each edge is arbitrary and will have no influence on the results of this article. For a directed edge $e\in E_G$ we write $\partial_-(e)\in V_G$ for its start vertex and $\partial_+(e)\in V_G$ for its target vertex, such that
	\al{
		\partial: e \mapsto ( \partial_-(e), \partial_+(e) )
	}	
	 Unless explicitly stated otherwise we assume $G$ to be connected, but its subgraphs may have multiple components. If a subgraph $g\subset G$ does not contain isolated vertices (which is the case for all the types of subgraphs we discuss below) it is uniquely defined by its edge set via $\partial(E_g)$ and we use the notation for the edge subset and the actual subgraph interchangeably. 

	% DEF important subgraphs
\paragraph{Types of subgraphs.}
	There are a multitude of significant types of graphs. For our purposes we concentrate on three of them, spanning trees, bonds and cycles.\\
	
	 A \textit{spanning tree} $T\subset G$ is a tree (i.e. a connected and simply connected graph) that contains all vertices of $G$. In other words:
	 \al[*]{
	 	h_0(T) = 1 \qquad h_1(T) = 0 \qquad V_T = V_G
	 }
	We denote the set of all spanning trees of $G$ by $\mathcal T_G$.\\
	
	 A \textit{bond} $B\subset G$ is a minimal subgraph such that $G' = G\setminus B  \equiv (V_G, E_G \setminus B)$ has exactly two connected components. We denote the set of all bonds of $G$ with $\mathcal B_G$.\\
	
	 A \textit{cycle} $C\subset G$ is a subgraph of $G$ that is 2-regular. In other words, a cycle is a disjoint union of closed paths in $G$ with no repeated vertices or edges.\footnote{In this we expressly include self-cycles (``tadpoles''), i.e. cycles with only one edge.} A cycle is called \textup{simple} if it consists of only one connected component. We denote the set of all cycles of $G$ with $\mathcal C_G$ and the set of simple cycles with $\mathcal C_G^{[1]}$.

	% REM vector space
	\begin{rem}
		Consider the vector space of edge subsets of a graph $G$ over $\mathbb Z_2$, where addition is given by the symmetric difference
		\al{
			E_1\triangle E_2 \defeq (E_1\setminus E_2) \cup (E_2\setminus E_1) = (E_1\cup E_2) \setminus (E_1\cap E_2)
		} 
		and the inner product is
		\al{
			\langle E_1, E_2\rangle \defeq \begin{cases}
								\ 1 \text{ if } |E_1\cap E_2| \text{ odd}\\
								\ 0 \text{ if } |E_1\cap E_2| \text{ even}.
								\end{cases}
		}
		$\mathcal C_G$ and $\mathcal B_G$ are each others orthogonal complement in this vector space and thus span it.  While we do not explicitly use it, this structure is the underlying mechanism behind our results and for example manifestly visible in \refeq{eq_con-del_bondcycle}.
	\end{rem}

	% DEF notions of orientation
\paragraph{Notions of orientation.}
	It will be useful to introduce some different notions of orientation into these graphs. One can assign an orientation to any simple cycle $C \in \mathcal C_G^{[1]}$ by specifying a direction in which to traverse it. The \textit{relative orientation} of an edge $e$ with respect to a simple cycle $C$ is
	\begin{align*}
		\mathrm o_{C}(e) \defeq \begin{cases} +1 \text{ if } e \text{ is traversed along its direction}\\  -1 \text{ if } e \text{ is traversed opposite its direction}\\ \ \ \, 0 \text{ if } e\notin C \end{cases}
	\end{align*}
	For more than one edge we simply generalise the notation by defining
	\begin{align*}
		\mathrm o_{C}(e_1,\dotsc,e_n) \defeq \mathrm o_{C}(e_1)\dotsm\mathrm o_{C}(e_n).
	\end{align*}
	This clearly depends on the initial choice of orientation for $C$ if $n$ is odd. For our purposes this is no problem since we only need either the case $n=2$ or $n=1$ within squared terms. Similarly one can define the orientation of a bond by choosing one of the two connected components in its complement $G\setminus B = G_1 \sqcup G_2$, such that
	\begin{align*}
		\mathrm o_{B}(e) \defeq \begin{cases} +1 \text{ if } e \text{ is directed into the chosen component,}\\  -1 \text{ if } e \text{ is directed out of the chosen component,}\\ \ \ \, 0 \text{ if } e\notin B. \end{cases}
	\end{align*}
	For multiple edges one can again use a similar notation as in the cycle case and the same remarks about dependence on the initial choice apply.\\
	
	% FIGURE subgraph examples
	\begin{figure}[h]\begin{center}
		\includegraphics[scale=1.0]{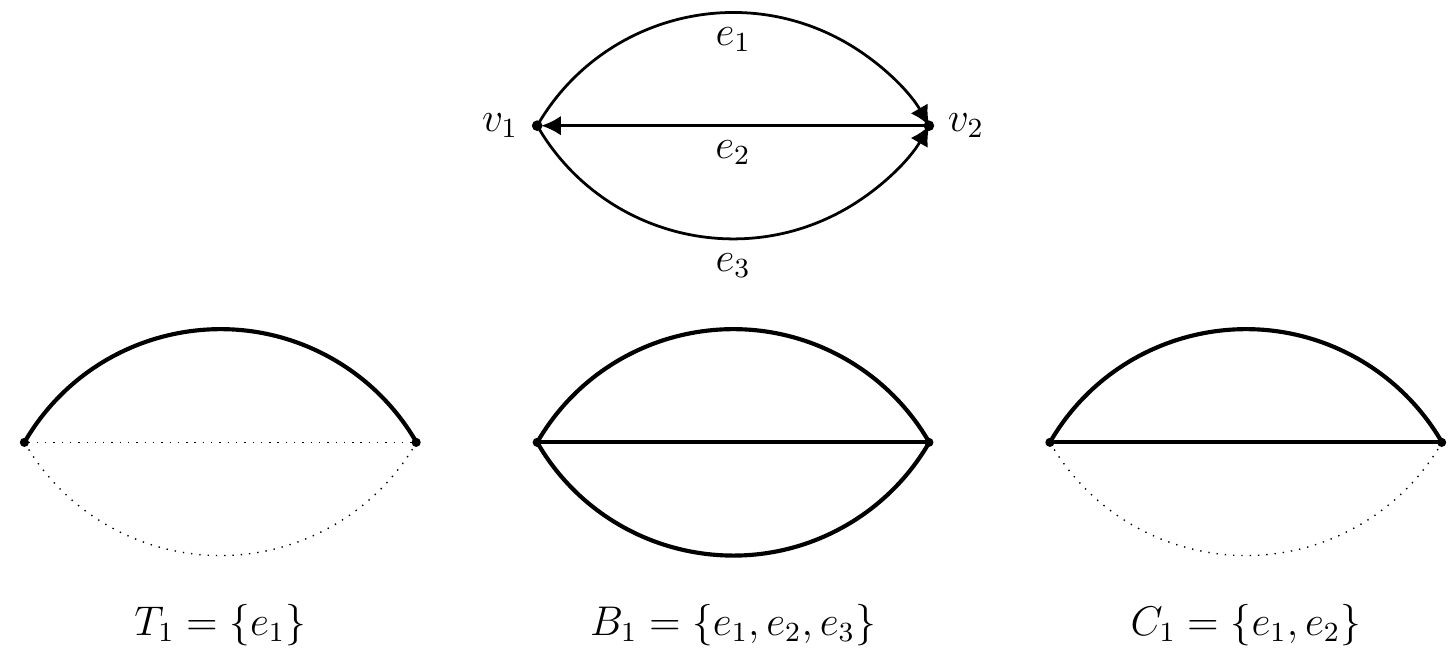}
		\caption[Spanning trees, bond, cycle]{The banana graph with 3 edges and examples for a spanning tree, a bond and a cycle subgraph of it.}\label{Example_01_Subgraphs}
	\end{center}\end{figure}

	% EXAMPLE subgraphs
	\begin{exam}
		Let $G$ be the banana graph with three edges depicted in \reffig{Example_01_Subgraphs}. It has three spanning trees
		\al[*]{
			T_1 = \{e_1\} \qquad T_2 = \{e_2\} \qquad T_3 = \{e_3\},
		}
		each consisting of a single edge. There is only one bond, $B_1 = \{e_1,e_2,e_3\}$, since all edges have to be removed to separate the graph into two components. Choose the orientation such that $\mathrm o_{B_1}(e_1) = +1$. Then $\mathrm o_{B_1}(e_3) = +1$, too, while $\mathrm o_{B_1}(e_2) = -1$. There are three cycles in $G$, each given by a pair of edges:
		\al[*]{
			C_1 = \{e_1,e_2\} \qquad C_2 = \{e_1,e_3\} \qquad C_3 = \{e_2,e_3\}
		}
		Choose an orientation for the cycles, say clockwise. Then the relative orientations of each edge with respect to each cycle are
		\al[*]{
			\mathrm o_{C_1}(e_1) &= +1\qquad\mathrm o_{C_1}(e_2) = +1\qquad\mathrm o_{C_1}(e_3) = 0\\
			\mathrm o_{C_2}(e_1) &= +1\qquad\mathrm o_{C_2}(e_2) = 0\hspace{33pt}\mathrm o_{C_2}(e_3) = -1\\
			\mathrm o_{C_3}(e_1) &= 0\hspace{33pt} \mathrm o_{C_3}(e_2) = -1\qquad\mathrm o_{C_3}(e_3) = -1
		}
	\end{exam}

	% DEF Feynman graphs	
\paragraph{Feynman graphs.} Feynman graphs are graphs with some extra information that can be used to encode the Feynman integrals we are interested in. There are two major significant differences compared to usual graphs. Firstly, there are so called \textit{external edges}, which are half-edges only incident to one vertex. They carry information about physical data like momenta of the incoming and outgoing particles they represent.  Secondly there can be different types of edges that represent different types of particles. Among others, there are \textit{scalars} (\scalar), \textit{photons} (\photon) and \textit{fermions} (\fermion). In the case of fermions we assume that the edge orientation is always given by the fermion flow direction (indicated by the arrow on the edge), for other edges it remains arbitrary. Feynman graphs often have restrictions on how many and which kind of edges are allowed to be incident to each vertex. For example in QED each vertex has to have one fermion oriented into it, one out of it and one photon line. Notationally we will use $G$ for general graphs and $\Gamma$ for Feynman graphs.
	
	% FIGURE Feynman graphs
	\begin{figure}[h] \begin{center}
		\includegraphics[scale=1.0]{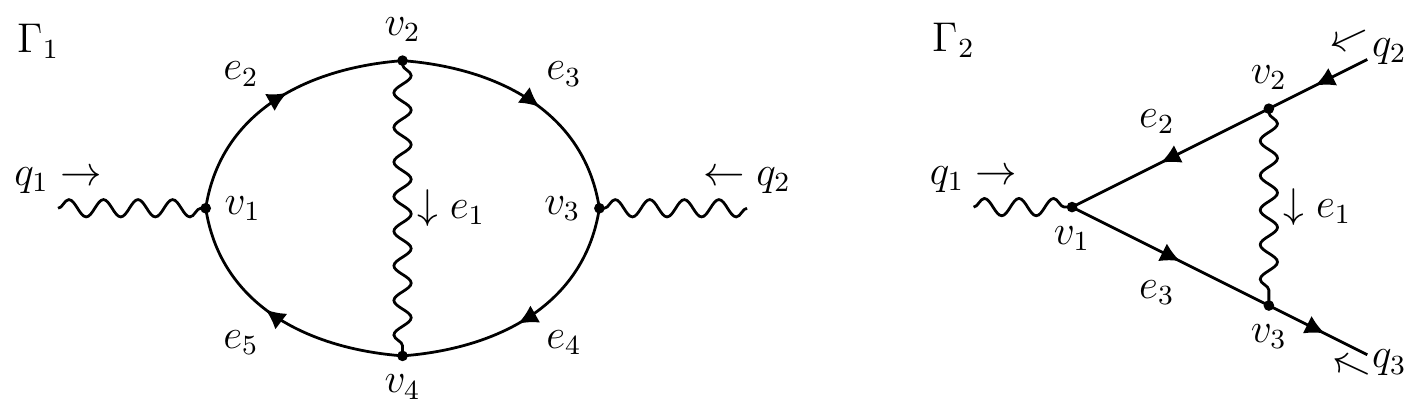}
		\caption[Feynman graphs]{Two examples of Feynman graphs from quantum electrodynamics. The $e_i$ and $v_i$ label edges and vertices while the $q_i$ are external momenta entering a graph via the external half-edges.}
		\label{Example_02_FeynmanGraphs}
	\end{center}\end{figure}

	% DEF classical graph polynomials
\subsection{Classical graph polynomials}\label{subsec_classpoly}
	Having introduced the three kinds of subgraphs that are important to us, we can assign to them polynomials that will appear in many places throughout. Each graph polynomial for a graph $G$ contains the variables $\alpha_e\in (0,\infty)\ \forall e \in E_G$ and possibly other formal parameters that we will introduce when needed. We will often use the abbreviations $\alpha = (\alpha_e)_{e \in E_G}$ and
	\al{
		\alpha_{S} \defeq \prod_{e \in S}\alpha_e
	}
	for any edge subset $S\subset E_G$.\\

	The first one is the \textit{Kirchhoff polynomial}, also often called first Symanzik polynomial, which corresponds to spanning trees. It is defined as
	\begin{align}
		\Psi_G(\alpha) \defeq  \sum_{T\in \mathcal T_G} \prod_{e\notin T} \alpha_e = \sum_{T\in \mathcal T_G} \alpha_{E_G \setminus T},\label{eq_def_Kirchhoff}
	\end{align}
	and has been known for a very long time, reaching back to Kirchhoff's study of electrical circuits \cite{Kirchhoff_1847_Kirchhoff}.\\

	The next polynomial is the \textit{second Symanzik polynomial}. Most commonly it is defined similarly to the Kirchhoff polynomial, by summing over spanning $2$-forests. Instead, we use an alternative definition\footnote{Equivalence can be seen quickly by noting that each spanning 2-forest lies in the complement of some bond, all spanning 2-forest in the same bond's complement share the same coefficient in the $\xi_e$ parameters, and their monomials $\alpha_{E_G \setminus (T_1 \cup T_2)}$ precisely make up $\alpha_B\Psi_{G\setminus B}$ for that bond.} via bonds that is, as we will see later, in many ways much more natural than the commonly used one:
	\begin{align}
		\Phi_G(\alpha, \xi) \defeq \sum_{B \in \mathcal B_G} \left(\sum_{e \in B} \mathrm o_{B}(e) \xi_e  \right)^2 \alpha_B \Psi_{G\setminus B}(\alpha)\label{eq_def_2ndSym}
	\end{align}
	Here, $\xi = (\xi_e)_{e \in E_G}$ is a tupel of formal parameters discussed in more detail below and $\Psi_{G\setminus B} = \Psi_{G_1(B)}\Psi_{G_2(B)}$ is the Kirchhoff polynomial of the bond's complement graph, with $G_1(B)$ and $G_2(B)$ denoting the two connected components.

	% REMARK evalutation physical Symanzik
	\begin{rem}\label{rem_evalSymanzik}
		In the case of Feynman graphs we can exchange the formal auxiliary parameters for physical momenta. Let $v_1,v_2$ be two external vertices, i.e. vertices with a half-edge incident to them, with incoming/outgoing momentum $q$. Choose any\footnote{The path independence of this is essentially Kirchhoff's voltage law, with momenta replacing voltages.} directed path between the vertices. Then replace the $\xi_e$ by
		\al[*]{
			\xi_e \to \begin{cases}
					\ \pm q \text{ if $e$ is in the path} \\
					\ \ \ 0 \text{ if $e$ is \textup{not} in the path.}
			\end{cases}
		}
		The sign of $q$ depends on the relative orientation of edge and path, similarly to what we defined above for cycles and bonds (in fact, each such path is of course just one of two segments of some cycle). This yields the usual physical second Symanzik polynomial. For example, the 3-edge banana from \reffig{Example_01_Subgraphs} has the second Symanzik polynomial $\Phi_G(\alpha,\xi) = (\xi_1-\xi_2+\xi_3)^2\alpha_1\alpha_2\alpha_3$. One can choose any single edge as the path, replace the corresponding $\xi_e$ by $q$ and set the other two to $0$ to get $\Phi_G(\alpha,q) = q^2\alpha_1\alpha_2\alpha_3$. If there are $n>2$ external vertices then one has to consider $n-1$ paths determined by momentum conservation, and replace $\xi_e$ by a sum of momenta, corresponding to all paths the edge is contained in. See also examples \ref{exam_classGraphPol}, \ref{ex_main_1} and \ref{ex_main_2}.
	\end{rem}

	% Disjoint union graph polynomials
	It is useful to extend the definition of graph polynomials to disjoint unions of graphs, and indeed we have already implicitly done that above in the definition of $\Phi_G$. Let $G = \bigsqcup_i G_i$ be a disjoint union of connected graphs $G_i$. Then the Kirchhoff polynomial of $G$ is simply the product of the polynomials of each of its components,
	\al{
		\Psi_G \defeq \prod_i \Psi_{G_i},
	}
	and the second Symanzik polynomial is
	\al{
		\Phi_G \defeq \sum_i \Phi_{G_i} \prod_{j\neq i} \Psi_{G_j}.
	}
	These formulae for the polynomials are precisely those for a vertex-1-connected graph that consists of the components $G_i$ arranged in a chain, each component overlapping with the next in only one vertex\footnote{See also \cite{BrownKreimer_2013_AnglesScales} and the ``circular joins'' used therein.}. It is sensible to define the polynomials for disjoint unions  $\bigsqcup_i G_i$ like this since there is clearly a one-to-one correspondence between spanning trees of such a vertex-1-connected graph and tuples of spanning trees, containing one tree from each component. Similarly, the bonds of such a graph are precisely the union of the bond sets of each component, hence the sum.
 	
	% Properties of graph polynomials.
\paragraph{Properties of graph polynomials.} These two polynomials have many interesting and useful properties. directly from the definition we observe that $\Psi_G$ and $\Phi_G$ are
	\begin{itemize}
		\item homogeneous of degree $h_1(G)$ and $h_1(G)+1$ in $\alpha_1,\dotsc,\alpha_{|E_G|}$,
		\item linear in each $\alpha_i$.
	\end{itemize}
	Moreover, they satisfy \textit{contraction-deletion} relations, which means that the polynomials belonging to graphs that are related via contraction or deletion of edges can be recovered easily from the original graph polynomial as follows\footnote{All this also holds for $\Phi_G$, except that the $\xi_e$ corresponding to the deleted or contracted edge also has to be set to 0.}:
	\al{
		\Psi_{G\dslash e}(\alpha) & = \left.\Psi_G(\alpha)\right|_{\alpha_e=0}\\[3mm]
		\Psi_{G\setminus e}(\alpha) & = \frac{\partial}{\partial \alpha_e} \Psi_G(\alpha)
	}
	From this one immediately finds a useful decomposition formula for graph polynomials:
	\al{
		\Psi_G(\alpha) = \Psi_{G\dslash e}(\alpha) + \alpha_e\Psi_{G\setminus e}(\alpha) \label{eq_con-del}
	}
	Note that two cases have to be excluded: One is the contraction of a tadpole edge, which is the same as just deleting the edge since its endpoints are already identified. The other is deletion of a bridge, which would disconnect a graph and assign to it a vanishing graph polynomial. This would of course be inconsistent with our extension of graph polynomials to graphs with more than one connected component.\\

	% EXAMPLE graph pols
	\begin{exam}\label{exam_classGraphPol}
		Let $\Gamma_1$, $\Gamma_2$ be the two Feynman graphs from \reffig{Example_02_FeynmanGraphs} above. Their Kirchhoff polynomials are
		\al{
			\Psi_{\Gamma_1}(\alpha) &= (\alpha_2+\alpha_5)(\alpha_3+\alpha_4) + \alpha_1(\alpha_2+\alpha_3+\alpha_4+\alpha_5)\\[5mm]
			\Psi_{\Gamma_2}(\alpha) &= \alpha_1+\alpha_2+\alpha_3,
		}
		while the second Symanziks in their general form are
		\al{\nonumber
			\Phi_{\Gamma_1}(\alpha,\xi) &= (\xi_2-\xi_5)^2\alpha_2\alpha_5(\alpha_1+\alpha_3+\alpha_4) +
								 (\xi_3-\xi_4)^2\alpha_3\alpha_4(\alpha_1+\alpha_2+\alpha_5)\\\nonumber
							&\quad\ +(\xi_1-\xi_2+\xi_4)^2\alpha_1\alpha_2\alpha_4 + 
									(\xi_1+\xi_3-\xi_5)^2\alpha_1\alpha_3\alpha_5\\
							&\quad\ +(\xi_1-\xi_2+\xi_3)^2\alpha_1\alpha_2\alpha_3 +
									(\xi_1+\xi_4-\xi_5)^2\alpha_1\alpha_4\alpha_5\\[5mm]
			\Phi_{\Gamma_2}(\alpha,\xi) &= (\xi_1+\xi_2)^2\alpha_1\alpha_2 + (\xi_1+\xi_3)^2\alpha_1\alpha_3 + (\xi_2 -\xi_3)^2\alpha_2\alpha_3.
		}
		All the properties mentioned above are clearly present. Furthermore, we show how to recover the physical versions $\Phi_{\Gamma_1}(\alpha,q_1,q_2)$ and $\Phi_{\Gamma_2}(\alpha,q_1,q_2,q_3)$ of the second Symanzik polynomials. For $\Gamma_1$ we need to choose any path between the vertices $v_1$ and $v_3$, say $\{e_2,e_3\}$. The external momenta entering these two vertices are $q_1$ and $q_2$, and by momentum conservation we know $-q_2 = q_1 \equiv q$. Consequently we have to evaluate $\xi_2,\xi_3 = q$ and $\xi_1,\xi_4,\xi_5 = 0$ and find
		\al{\nonumber
			\Phi_{\Gamma_1}(\alpha,q) &= (q-0)^2\alpha_2\alpha_5(\alpha_1+\alpha_3+\alpha_4) +
								 (q-0)^2\alpha_3\alpha_4(\alpha_1+\alpha_2+\alpha_5)\\\nonumber
							&\quad\ +(0-q+0)^2\alpha_1\alpha_2\alpha_4 + 
									(0+q-0)^2\alpha_1\alpha_3\alpha_5\\\nonumber
							&\quad\ +(0-q+q)^2\alpha_1\alpha_2\alpha_3 +
									(0+0-0)^2\alpha_1\alpha_4\alpha_5\\[1mm]
							& =  q^2\big( \alpha_2\alpha_5(\alpha_1+\alpha_3+\alpha_4)
							+\alpha_3\alpha_4(\alpha_1+\alpha_2+\alpha_5)+ \alpha_1\alpha_2\alpha_4+\alpha_1\alpha_3\alpha_5\big).
		}
		For $\Gamma_2$ one has to consider two different paths. Due to momentum conservation one has $q_3 = -q_1 - q_2$, and we need paths from $v_1$ and $v_2$ to $v_3$, say the single edges $e_1$ and $e_3$. The path $e_1$ corresponds to momentum flow of $q_2$ from $v_1$ to $v_3$, so one replaces $\xi_1 \to q_2$. Similarly for the other path one replaces $\xi_3 \to q_1$ and $\xi_2 \to 0$. Hence,
		\al{\nonumber
			\Phi_{\Gamma_2}(\alpha,q_1,q_2) &= (q_2+0)^2\alpha_1\alpha_2 + (q_2+q_1)^2\alpha_1\alpha_3 + (0 -q_1)^2\alpha_1\alpha_3\\[3mm]
								& = q_2^2\alpha_1\alpha_2 + (q_1+q_2)^2\alpha_1\alpha_3 + q_1^2\alpha_2\alpha_3.
		}
	\end{exam}

% Cycle polynomials
\subsection{The cycle polynomials}\label{subsec_cyclepol}

	\begin{defi}\textup{\textbf{(Cycle polynomials)}}\label{def_cycle_polynomial}\\
		Let $G$ be a connected graph, $\mathcal C_G$ its set of cycles and $\mathcal C_G^{[1]}$ the simple cycles. Then we define the cycle polynomial of $G$ to be
		\al{
			\chi_G(\alpha,\xi) \defeq \sum_{C \in \mathcal C_G^{[1]}} \left(\sum_{e \in C} \mathrm o_C(e) \xi_e  \right)^2   \Psi_{G\dslash C}(\alpha).
		}
		Furthermore we define two versions of the cycle polynomial restricted to certain subsets of $\mathcal C_G^{[1]}$. These will be the most important for our applications. For each $e\in E_G$ define
		\al{
			\chi^{(e)}_G(\alpha) \defeq \frac{1}{2}\frac{\partial^2}{\partial\xi_e^2} \chi_G(\alpha,\xi) = \sum_{C \in \mathcal C_G^{[1]}} \mathrm o_C(e)^2 \Psi_{G\dslash C}(\alpha) = \sum_{\substack{C \in \mathcal C_G^{[1]}\\ C \ni e}} \Psi_{G\dslash C}(\alpha).
		}
		and for each pair of distinct edges $e_1,e_2 \in E_G$ define
		\al{
			\chi^{(e_1,e_2)}_G(\alpha) \defeq \frac{1}{2}\frac{\partial^2}{\partial\xi_{e_1}\partial\xi_{e_2}} \chi_G(\alpha,\xi) = \sum_{C \in \mathcal C_G^{[1]}} \mathrm o_{C}(e_1,e_2) \Psi_{G\dslash C}(\alpha).
		}
	\end{defi}

	% REMARK definition/notation
	\noindent Furthermore, we define an expression $\chi^{(e,e)}_G$ for a pair in which both edges are the same:
	\al{
		\chi^{(e,e)}_G(\alpha,\varepsilon) \defeq \chi^{(e)}_G(\alpha) + \frac{2\Psi_G(\alpha)}{\varepsilon\alpha_e}\label{eq_chi_sameedge}
	}
	While this is not a polynomial anymore, it provides a notation that will allow us to write down our result very elegantly. In fact, we will see that the parameter $\varepsilon>0$ will be the gauge parameter of quantum electrodynamics.\\

	% REMARK cycle pol disjoint
	By using the definition of the Kirchhoff polynomial for disconnected graphs, the cycle polynomial definition above clearly extends to disconnected graphs as well, without any need for changes in notation. Similarly, all the properties and identities discussed below hold quite obviously in general, but we restrict ourselves to connected graphs $G$ for simplicity.

	% REMARK bond polynomial
	\begin{rem}\label{rem_defbondpol}
		The similarity of $\chi_G(\alpha,\xi)$ and the second Symanzik polynomial is no accident. In fact, one might consider the second Symanzik polynomial as a \textup{``bond polynomial''} $\beta_G(\alpha,\xi) \equiv \Phi_G(\alpha,\xi)$ and define $\beta^{(e)}_G(\alpha)$, $\beta^{(e_1,e_2)}_G(\alpha)$ analogously to what we did for cycles:
		\al{
			\beta_G^{(e)}(\alpha)&\defeq \frac{1}{2}\frac{\partial^2}{\partial\xi_e^2} \beta_G(\alpha,\xi) 
						= \sum_{B \in \mathcal B_G} \mathrm o_{B}(e)^2\alpha_B \Psi_{G\setminus B}(\alpha)
						= \sum_{\substack{B \in \mathcal B_G\\ B \ni e}} \alpha_B \Psi_{G\setminus B}(\alpha)\\[3mm]
			\beta^{(e_1,e_2)}_G(\alpha)&\defeq \frac{1}{2}\frac{\partial^2}{\partial\xi_{e_1}\partial\xi_{e_2}} \beta_G(\alpha,\xi)
							= \sum_{B \in \mathcal B_G} \mathrm o_B(e_1,e_2) \alpha_B \Psi_{G\setminus B}(\alpha)
		}
	\end{rem}

	% FIGURE contraction of cycles
	\begin{figure}[h] \begin{center}
		\includegraphics[scale=1.0]{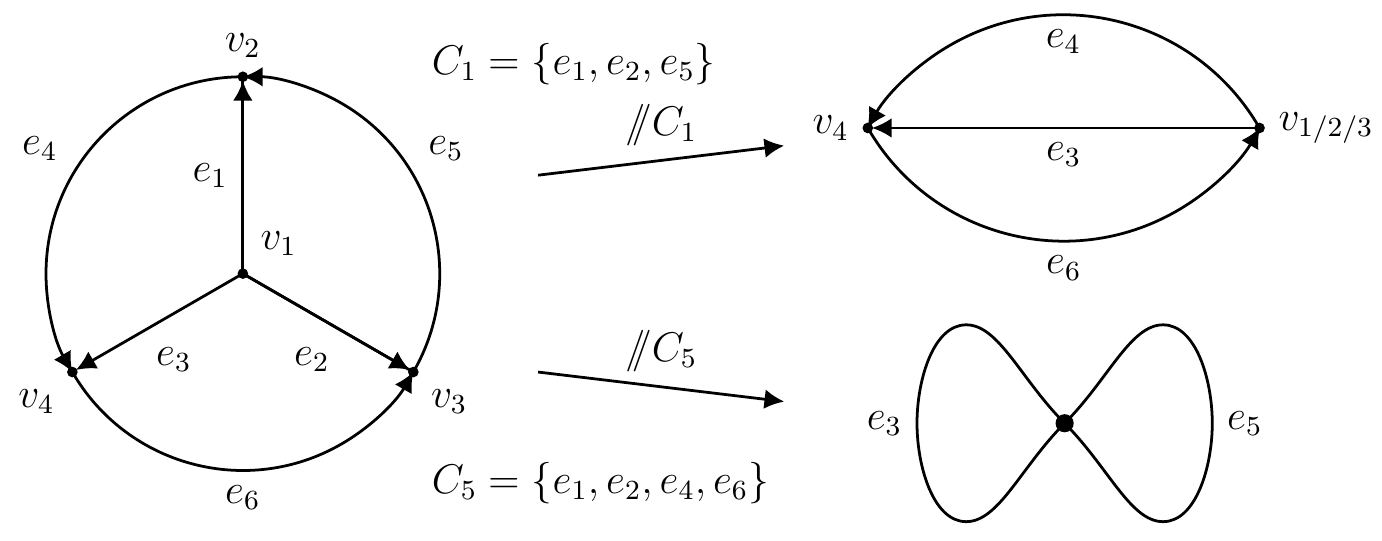}
		\caption[Cycle contraction in $WS_3$]{The \textit{wheel with 3 spokes} graph $WS_3$ and the contraction of two of its cycles.}
		\label{Example_03_CycleContraction}
	\end{center}\end{figure}

	% EXAMPLE cycle pol
	\begin{exam}
		Let $G = WS_3$ be the wheel with three spokes as depicted in \reffig{Example_03_CycleContraction}. It contains seven cycles, all of which are simple:
		\al[*]{
			C_1 & = \{e_1, e_2, e_5\} \qquad C_2 = \{e_1, e_3, e_4\} \qquad C_3 = \{e_2, e_3, e_6\} \qquad C_4 = \{e_4, e_5, e_6\}\\[3mm]
			C_5 & = \{e_1, e_2, e_4, e_6\} \hspace{19mm} C_6 = \{e_1, e_3, e_5, e_6\} \hspace{19mm} C_7 = \{e_2, e_3, e_4, e_5\}
		}
		Contracting the 3-edge cycles $C_1, C_2, C_3, C_4$ results in a 3-edge banana graph, while contraction of the 4-edge cycles returns a rose with the two remaining edges. Hence,
		\al[*]{
			\Psi_{G\dslash C_1}(\alpha) &= \alpha_3\alpha_4 + \alpha_3\alpha_6 + \alpha_4\alpha_6\qquad
			\Psi_{G\dslash C_2}(\alpha) = \alpha_2\alpha_5 + \alpha_2\alpha_6 + \alpha_5\alpha_6\\[3mm]
			\Psi_{G\dslash C_3}(\alpha) &= \alpha_1\alpha_4 + \alpha_1\alpha_5 + \alpha_4\alpha_5\qquad
			\Psi_{G\dslash C_4}(\alpha) = \alpha_1\alpha_2 + \alpha_1\alpha_3 + \alpha_2\alpha_3\\[3mm]
			\Psi_{G\dslash C_5}(\alpha) &= \alpha_3\alpha_5 \hspace{16mm} 
			\Psi_{G\dslash C_6}(\alpha) = \alpha_2\alpha_4 \hspace{16mm}
			\Psi_{G\dslash C_7}(\alpha) = \alpha_1\alpha_6
		}
		and the full cycle polynomial is
		\al{\nonumber
			&\hspace{-5mm}\chi_G(\alpha, \xi)=\\\nonumber
				& (\xi_1-\xi_2-\xi_5)^2\alpha_3\alpha_4 + \alpha_3\alpha_6 + \alpha_4\alpha_6
						+ (\xi_1-\xi_3+\xi_4)^2\alpha_2\alpha_5 + \alpha_2\alpha_6 + \alpha_5\alpha_6\\\nonumber
					      +& (\xi_2-\xi_3-\xi_6)^2\alpha_1\alpha_4 + \alpha_1\alpha_5 + \alpha_4\alpha_5
					      + (\xi_4+\xi_5+\xi_6)^2\alpha_1\alpha_2 + \alpha_1\alpha_3 + \alpha_2\alpha_3\\
					      +& (\xi_1-\xi_2+\xi_4+\xi_6)^2\alpha_3\alpha_5
  					        + (\xi_1-\xi_3-\xi_5-\xi_6)^2\alpha_2\alpha_4
					        + (\xi_2-\xi_3+\xi_4+\xi_5)^2\alpha_1\alpha_6
		}
		The restricted version, say for $e=e_1$ contains only the terms corresponding to cycles that contain $e_1$, so $C_1$, $C_2$, $C_5$ and $C_6$. Hence,
		\al{
			\chi_G^{(e_1)}(\alpha) &= (\alpha_2+\alpha_3)(\alpha_4+\alpha_5) + \alpha_6(\alpha_2 + \alpha_3 + \alpha_4 + \alpha_5).
		}
		Finally, consider two edges, say $e_1,e_2$. The only cycles that contain both are $C_1$ and $C_5$, and $e_1,e_2$ have opposing directions in both those cycles such that $\mathrm o_{C_1}(e_1,e_2) = -1 = \mathrm o_{C_5}(e_1,e_2)$. Hence
		\al{
			\chi_G^{(e_1,e_2)}(\alpha) &= - \alpha_3(\alpha_4 + \alpha_5 + \alpha_6) - \alpha_4\alpha_6.
		}
		For a different example consider $e_1,e_6$. There are again two cycles containing this pair, $C_5$ and $C_6$. However, this time $\mathrm o_{C_5}(e_1,e_6) = +1$ and $\mathrm o_{C_6}(e_1,e_6) = -1$, so
		\al{
			\chi_G^{(e_1,e_6)}(\alpha) &= \alpha_3\alpha_5 - \alpha_2\alpha_4.
		}
	\end{exam}

% Properties cycle pol
\paragraph{Properties of cycle polynomials.}
	
	Cycle polynomials have a number of interesting properties. One immediately notices that they inherit linearity in each $\alpha_e$ as well as homogeneity of degree $h_1(G)-1$ from the Kirchhoff polynomials. They also satisfy the usual contraction-deletion relations, which we prove in
	\begin{prop}
		Let $G$ be a connected graph. Then	 for every $e\in E_G$
		\al{
			\chi_{G\setminus e}(\alpha,\xi) = \frac{\partial}{\partial \alpha_e}\left. \chi_{G}(\alpha,\xi)\right|_{\xi_e=0},
		}
		and for every $e\in E_G$ except tadpoles
		\al{
			\chi_{G\dslash e}(\alpha,\xi) =  \left.\chi_{G}(\alpha,\xi)\right|_{\alpha_e,\xi_e=0}.
		}
		The same relations hold for the restricted versions $\chi_{G}^{(e_0)}(\alpha)$ and $\chi_{G}^{(e_1,e_2)}(\alpha)$.
	\begin{proof}
		Firstly, the cycle set of $G$ can be separated into two disjoint subsets, depending on whether or not the cycles contain a given edge $e$ and $\mathcal C_{G\setminus e}^{[1]} = \{ C \in \mathcal C_G^{[1]} \ | \ e \notin C \}$. Secondly, if a cycle $C$ contains an edge $e$, then $\Psi_{G\dslash C}$ is independent of $\alpha_e$ and the derivative w.r.t. $\alpha_e$ vanishes. For the cycles not containing $C$ we can employ the contraction-deletion relation for the Kirchhoff polynomial and commutativity of contracting and deleting to find
		\al{\nonumber
			\frac{\partial}{\partial \alpha_e}\left.\chi_{G}(\alpha)\right|_{\xi_e=0}
				& = \sum_{C \in \mathcal C_G^{[1]}} \Bigg(\sum_{\substack{e' \in C\\ e'\neq e}} \mathrm o_C(e') \xi_e'  \Bigg)^2 \frac{\partial}{\partial \alpha_e}\Psi_{G\dslash C}(\alpha)\\\nonumber
				&	= \sum_{\substack{C \in \mathcal C_G^{[1]}\\ e \notin C}} \Bigg(\sum_{\substack{e' \in C\\ e'\neq e}} \mathrm o_C(e') \xi_e'  \Bigg)^2\Psi_{(G\dslash C)\setminus e}(\alpha)\\
				&	= \sum_{C \in \mathcal C_{G\setminus e}^{[1]}} \Bigg(\sum_{e' \in C} \mathrm o_C(e') \xi_e'  \Bigg)^2 \Psi_{(G\setminus e)\dslash C}(\alpha)
					= \chi_{G\setminus e}(\alpha).
		}		
		For the contraction we need to consider three different cases. let $C \in \mathcal C_G^{[1]}$ and $e\in E_G$ any non-tadpole edge. If $e\in C$ then one observes that after contraction $C\dslash e$ is still a cycle of $G\dslash e$ and their corresponding Kirchhoff polynomials are unchanged since they are independent of $\alpha_e$. If $e\notin C$, but both endpoints of $e$ lie in $C$, then two things happen. On the one hand, contracting $e$ joins four edges of $C$ in one vertex such that $C$ is \textit{not} a cycle of $G\dslash e$ anymore. On the other hand, contracting $C$ turns $e$ into tadpole, such that $\Psi_{G\dslash C}(\alpha)$ vanishes when evaluating at $\alpha_e=0$. Therefore, these terms vanish on both sides of the contraction relation. Finally, when $e\notin C$ and at least one of its endpoints is not incident to a vertex of $C$, then clearly $C$ remains a cycle in $G\dslash e$ and
		\al{
			\left.\Psi_{G\dslash C}(\alpha)\right|_{\alpha_e=0} = \Psi_{(G\dslash C)\dslash e}(\alpha) = \Psi_{(G\dslash e)\dslash C}(\alpha).
		}
		Repeating the same steps restricted to cycle subsets containing $e_0$ or $e_1$, $e_2$ proves the deletion relation for the restricted polynomials.
	\end{proof}
	\end{prop}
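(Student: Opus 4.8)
The plan is to exploit the single structural fact that separates the cycle polynomial from the Kirchhoff polynomials out of which it is built: all the $\alpha$-dependence sits inside the factors $\Psi_{G\dslash C}$, while all the $\xi$-dependence sits inside the orientation prefactors $\big(\sum_{e\in C}\mathrm o_C(e)\xi_e\big)^2$. Because the operations in question — differentiating in $\alpha_e$ or evaluating at $\alpha_e=0$, and evaluating at $\xi_e=0$ — act on these two pieces independently, I would reduce everything to the contraction-deletion relation \refeq{eq_con-del} for $\Psi$, together with the elementary fact that contraction and deletion of \emph{distinct} edges commute. The organising observation is that $\mathcal C_G^{[1]}$ splits into cycles containing $e$ and cycles avoiding $e$, and that the latter are exactly the simple cycles of the deleted graph, $\mathcal C_{G\setminus e}^{[1]}=\{C\in\mathcal C_G^{[1]}\mid e\notin C\}$.

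For the deletion relation I would first set $\xi_e=0$, which merely drops the $e$-summand from each prefactor, leaving $\big(\sum_{e'\in C,\,e'\neq e}\mathrm o_C(e')\xi_{e'}\big)^2$, so that $\partial/\partial\alpha_e$ then acts only on $\Psi_{G\dslash C}$. The key point is that whenever $e\in C$ the edge $e$ has already been contracted inside $G\dslash C$, hence $\Psi_{G\dslash C}$ is independent of $\alpha_e$ and its derivative vanishes; only the cycles with $e\notin C$ survive, and for those $\partial_{\alpha_e}\Psi_{G\dslash C}=\Psi_{(G\dslash C)\setminus e}=\Psi_{(G\setminus e)\dslash C}$ by \refeq{eq_con-del} and commutativity. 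Since these cycles already carry no $\xi_e$, the surviving sum is precisely $\chi_{G\setminus e}(\alpha,\xi)$.

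For the contraction relation I would set $\alpha_e=\xi_e=0$ and run a three-case analysis over $C\in\mathcal C_G^{[1]}$. If $e\in C$, contraction sends $C$ to the simple cycle $C\dslash e$ of $G\dslash e$, the prefactor loses only its ($\xi_e=0$) term, and $\Psi_{G\dslash C}=\Psi_{(G\dslash e)\dslash(C\dslash e)}$ is $\alpha_e$-free, so the term reproduces the matching term of $\chi_{G\dslash e}$. If $e\notin C$ but both endpoints of $e$ lie on $C$, then contracting $e$ identifies two distinct vertices of $C$ and destroys its $2$-regularity, so $C$ is no longer a cycle of $G\dslash e$; at the same time $e$ becomes a tadpole in $G\dslash C$, so every spanning tree omits it and every monomial of $\Psi_{G\dslash C}$ carries an $\alpha_e$, whence the term is annihilated by $\alpha_e=0$ — both sides discard these terms. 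Finally, if $e\notin C$ and some endpoint of $e$ lies off $C$, the cycle persists and $\left.\Psi_{G\dslash C}\right|_{\alpha_e=0}=\Psi_{(G\dslash e)\dslash C}$, matching again. Assembling the surviving first and third cases yields $\chi_{G\dslash e}(\alpha,\xi)$.

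The main obstacle I anticipate is exactly the middle case of the contraction: one must argue carefully that the cycles disappearing on the left (a chord collapsing the cycle under contraction) coincide with those annihilated on the right (the cycle's contraction turning that chord into a tadpole, forcing an $\alpha_e$ into every term). This tadpole/collapse matching is the one place where the bookkeeping is genuinely combinatorial rather than formal. The restricted statements require no new ideas: $\chi_G^{(e_0)}$ and $\chi_G^{(e_1,e_2)}$ arise from $\chi_G$ by the $\xi$-derivatives of Definition \ref{def_cycle_polynomial}, equivalently by restricting each sum to the cycles through $e_0$ (respectively through both $e_1$ and $e_2$); since the whole argument proceeds cycle by cycle and only ever touches the $\Psi$ factor and cycle-membership in $e$, it descends verbatim to any such index subset, and one simply repeats the three cases for the restricted sums.
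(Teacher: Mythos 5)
Your proposal is correct and follows essentially the same route as the paper: the same splitting of $\mathcal C_G^{[1]}$ by membership of $e$ for the deletion relation, the identical three-case analysis (including the chord/tadpole matching in the middle case) for the contraction relation, and the same remark that the argument descends to the restricted polynomials.
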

	
	Additionally, there are relations between $\Psi_{G\setminus e}$, $\Psi_{G\dslash e}$ and $\chi_G^{(e)}$, as well as the bond polynomial $\beta_G^{(e)}$ defined in remark \ref{rem_defbondpol}. The identities established in the following three lemmata already contain most of the work needed to prove our main result in section \ref{Sec_MainResult}. In particular, the first two lemmata give us an alternative version of the contraction-deletion relation \refeq{eq_con-del} in terms of the cycle and bond polynomial
	\al{
		\Psi_G(\alpha) = \Psi_{G\dslash e}(\alpha) + \alpha_e\Psi_{G\setminus e}(\alpha) = \alpha_e^{-1}\beta^{(e)}_G(\alpha) + \alpha_e\chi^{(e)}_G(\alpha),
		\label{eq_con-del_bondcycle}
	}
	while the third introduces a relation between cycles and bonds.
	
% LEMMA cycle = delete
	\begin{lem}\label{lem_cyclepol_decomp}
		Let $G$ be a connected graph and $e\in E_G$ any of its edges that is not a bridge\footnote{If it were a bridge then no cycle would contain it such that $\chi_G^{(e)}(\alpha) = 0$. This would be inconsistent with our multiplicative definition of graph polynomials for graphs with multiple connected components that demands $\Psi_{G\setminus e}(\alpha) \neq 0$.}. Then
		\al{
			\chi_G^{(e)}(\alpha) = \Psi_{G\setminus e}(\alpha).
		}
	\begin{proof}
		The left hand side written out in full is
		\al{
			\chi^{(e)}_G(\alpha) = \sum_{\substack{C \in \mathcal C_G^{[1]}\\ C \ni e}} \Psi_{G\dslash C}(\alpha) = \sum_{\substack{C \in \mathcal C_G^{[1]}\\ C \ni e}} \sum_{T \in \mathcal T_{G\dslash C}} \alpha_{E_G\setminus (C\cup T)}.
		}	
		By definition all cycles summed over here contain $e$, and contracting a cycle $C$ is the same as contracting all but one of its edges and deleting the last edge (which is only a tadpole after contraction of all the others and thus not contained in any spanning trees). Hence, we can replace the sum over cycles by a sum over paths $P$ between vertices $v$, $w$ in $G\setminus e$, where $(v, w) = \partial(e)$, i.e.
		\al{
			\chi^{(e)}_G(\alpha) = \sum_{P \in \mathcal P^{v,w}_{G\setminus e}}\Psi_{(G\setminus e)\dslash P}(\alpha).
		}
		Two observations suffice to finish the proof. Firstly, a spanning tree remains a spanning tree after contraction of any number of its edges, i.e. for a connected graph $H$ one of its spanning trees $T$ and any edge subset $T'\subseteq T$ it follows that $T\dslash T'$ is also a spanning tree of $H\dslash T'$. Secondly, \textit{every} spanning tree contains a \textit{unique} path between any two vertices of the graph it spans. Therefore
		\al{
			\mathcal T_{(G\setminus e)\dslash P} \simeq \{ T\in \mathcal T_{G\setminus e} \ | \ T \supset P \}		
		}
		and
		\al{
			\bigcup_{P \in \mathcal P^{v,w}_{G\setminus e}} \mathcal T_{(G\setminus e)\dslash P} \simeq \mathcal T_{G\setminus e}, \qquad \bigcap_{P \in \mathcal P^{v,w}_{G\setminus e}} \mathcal T_{(G\setminus e)\dslash P} = \emptyset.
		}
		Finally, for the monomial it is irrelevant whether an edge is contained in the spanning tree or was contracted - each monomial contains variables associated to edges that are still in the graph and not in the spanning tree. In other words, for spanning trees $T_1\in\mathcal T_{(G\setminus e)\dslash P}$ and $T_2 \in \mathcal T_{G\setminus e}$ with $P\subseteq T_2$ one has $\alpha_{E_G\setminus ( e \cup P \cup T_1 )}  = \alpha_{E_G\setminus ( e \cup T_2 )}$ such that
		\al{
			\chi^{(e)}_G(\alpha) = \sum_{P \in \mathcal P^{v,w}_{G\setminus e}}\sum_{T \in \mathcal T_{(G\setminus e)\dslash P}} \alpha_{E_G\setminus (e \cup P \cup T)} = \sum_{T \in \mathcal T_{G\setminus e}} \alpha_{E_G\setminus (e \cup T)} = \Psi_{G\setminus e}(\alpha).
		}
	\end{proof}
	\end{lem}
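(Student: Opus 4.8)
The plan is to expand both sides as sums of monomials and then exhibit a bijection matching them term by term. By the definitions of the restricted cycle polynomial and of the Kirchhoff polynomial, the left-hand side reads
\[
\chi_G^{(e)}(\alpha) = \sum_{\substack{C \in \mathcal C_G^{[1]} \\ C \ni e}} \Psi_{G\dslash C}(\alpha) = \sum_{\substack{C \in \mathcal C_G^{[1]} \\ C \ni e}} \sum_{T \in \mathcal T_{G\dslash C}} \alpha_{E_G\setminus(C\cup T)},
\]
so the task reduces to matching pairs $(C,T)$, with $C$ a simple cycle through $e$ and $T$ a spanning tree of $G\dslash C$, against the spanning trees counted by $\Psi_{G\setminus e}$.

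First I would pass from cycles to paths. Writing $\partial(e)=(v,w)$, every simple cycle containing $e$ is uniquely of the form $C = P\cup\{e\}$ for a simple $v$-$w$ path $P$ in $G\setminus e$, and conversely each such path yields one such cycle. Contracting $C$ is then the same as contracting $P$ inside $G\setminus e$: once the edges of $P$ have been contracted the two endpoints of $e$ are identified, so $e$ becomes a tadpole and is absent from every spanning tree. Hence $\Psi_{G\dslash C} = \Psi_{(G\setminus e)\dslash P}$ and the sum becomes
\[
\chi_G^{(e)}(\alpha) = \sum_{P\in\mathcal P^{v,w}_{G\setminus e}} \Psi_{(G\setminus e)\dslash P}(\alpha).
\]

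The heart of the argument is a bijection between pairs $(P,T')$, with $P$ a $v$-$w$ path in $G\setminus e$ and $T'$ a spanning tree of $(G\setminus e)\dslash P$, and the spanning trees of $G\setminus e$. In one direction I would use that contracting a subset of the edges of a spanning tree again yields a spanning tree, so that $T'$ together with $P$ lifts to a spanning tree of $G\setminus e$ containing $P$. In the other direction I would use that every spanning tree $T$ of $G\setminus e$ contains a unique $v$-$w$ path $P_T$, whence $T\dslash P_T$ is a spanning tree of $(G\setminus e)\dslash P_T$; the uniqueness of the path is precisely what makes the fibres over distinct paths disjoint and their union all of $\mathcal T_{G\setminus e}$.

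The step I expect to demand the most care is the bookkeeping of monomials under this bijection: for a pair $(P,T')$ the monomial is $\alpha_{E_G\setminus(e\cup P\cup T')}$, whereas for the associated tree $T$ it is $\alpha_{E_G\setminus(e\cup T)}$, and these must be seen to coincide, since the complement of a spanning tree does not record whether the path edges were contracted or retained. Once this is established the double sum collapses to $\sum_{T\in\mathcal T_{G\setminus e}}\alpha_{E_G\setminus(e\cup T)} = \Psi_{G\setminus e}(\alpha)$, as desired. Finally, I would observe that the non-bridge hypothesis is exactly what guarantees the existence of at least one cycle through $e$, so that both sides are genuinely nonzero and compatible with the multiplicative convention for graph polynomials on disconnected graphs.
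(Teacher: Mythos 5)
Your proposal is correct and follows essentially the same route as the paper's proof: the same reduction from cycles through $e$ to $v$-$w$ paths in $G\setminus e$, the same bijection between pairs (path, spanning tree of the contraction) and spanning trees of $G\setminus e$ via uniqueness of tree paths, and the same monomial bookkeeping. The only addition is your closing remark on the role of the non-bridge hypothesis, which the paper handles in a footnote.
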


% LEMMA Bond = contract
	\begin{lem}\label{lem_bondpol_decomp}
		Let $G$ be a connected graph and $e\in E_G$ any of its edges that is not a tadpole. Then
		\al{
			\beta_G^{(e)}(\alpha) = \alpha_e \Psi_{G\dslash e}(\alpha).
		}
	\begin{proof}
		Let $B \in \mathcal B_G$ be a bond that contains $e$ and  $G'(B)$ the graph obtained from $G\setminus B$ by identifying the vertices that $e$ is incident to, i.e.
		\al{
			G'(B) = (G\setminus (B\setminus e))\dslash e.
		}
		 Then $\mathcal T_{G\setminus B} \simeq \mathcal T_{G'(B)}$ and $\Psi_{G\setminus B}(\alpha) = \Psi_{G'(B)}(\alpha)$. Clearly, $\mathcal T_{G'(B)} \subset \mathcal T_{G\dslash e}$. Consider any spanning tree $T \in \mathcal T_{G\dslash e}$. The edges of $T$ form a spanning 2-forest in $G$ and their complement in $G$ contains \textit{exactly one} of the bonds $B \in \mathcal B_G$, given by the edges that are bridges between the two connected components. Therefore,
	\al{
		\bigcup_{\substack{B \in \mathcal B_G\\ B\ni e\ \ }} \mathcal T_{G'(B)} = \mathcal T_{G\dslash e} \quad\text{ and }\quad \bigcap_{\substack{B \in \mathcal B_G\\ B\ni e\ \ }} \mathcal T_{G'(B)} = \emptyset.
	}
	Moreover, for each monomial one sees that
	\al{
		\alpha_B \alpha_{E_G\setminus (T\cup B)} = \alpha_{E_G \setminus T} = \alpha_e \alpha_{E_G \setminus (e\cup T)}
	}
	from which we conclude that
	\al{
		\beta_G^{(e)}(\alpha) 
		= \sum_{\substack{B \in \mathcal B_G\\B\ni e}} \alpha_B \sum_{T \in \mathcal T_{G'(B)}} \alpha_{E_G\setminus (T\cup B)} 
		= \sum_{T \in \mathcal T_{G\dslash e}} \alpha_{E_G\setminus T} =  \alpha_e \Psi_{G\dslash e}(\alpha).
	}
	\end{proof}
	\end{lem}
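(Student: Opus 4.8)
The plan is to prove the identity by expanding both sides into sums of monomials indexed by subgraphs and then exhibiting a monomial-preserving bijection between the two index sets. The cleanest common index set will be the collection of spanning $2$-forests of $G$ that separate the two endpoints of $e$, and the whole argument amounts to showing that both $\beta_G^{(e)}$ and $\alpha_e\Psi_{G\dslash e}$ are generating polynomials for exactly these forests.

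First I would rewrite the right-hand side. Writing $u,v$ for the endpoints of $e$ and using $E_{G\dslash e}=E_G\setminus\{e\}$, one has $\Psi_{G\dslash e}(\alpha)=\sum_{T\in\mathcal T_{G\dslash e}}\alpha_{(E_G\setminus\{e\})\setminus T}$. Since no spanning tree of $G\dslash e$ contains $e$, multiplying by $\alpha_e$ simply reinstates that variable, giving $\alpha_e\Psi_{G\dslash e}(\alpha)=\sum_{T\in\mathcal T_{G\dslash e}}\alpha_{E_G\setminus T}$. I would then reinterpret each $T\in\mathcal T_{G\dslash e}$ as a subgraph of $G$: un-contracting $e$ splits the merged vertex back into $u$ and $v$, turning $T$ into a forest on $V_G$ with $|V_G|-2$ edges and exactly two components, with $u$ and $v$ lying in distinct components. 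This is a bijection onto the set of spanning $2$-forests of $G$ separating $u$ and $v$, and it preserves edge sets, so $\alpha_e\Psi_{G\dslash e}(\alpha)=\sum_F\alpha_{E_G\setminus F}$ with $F$ ranging over those forests.

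Next I would expand the left-hand side. For a bond $B\ni e$ the complement $G\setminus B$ has two components $G_1(B),G_2(B)$, and $\Psi_{G\setminus B}=\Psi_{G_1(B)}\Psi_{G_2(B)}$ is a sum over pairs $(T_1,T_2)$ of spanning trees of the components. Because $E_G$ is the disjoint union of $B$, $E_{G_1(B)}$ and $E_{G_2(B)}$, each term $\alpha_B\,\alpha_{(E_{G_1}\setminus T_1)\cup(E_{G_2}\setminus T_2)}$ equals $\alpha_{E_G\setminus F}$ for the spanning $2$-forest $F=T_1\cup T_2$ of $G$. Thus $\beta_G^{(e)}(\alpha)$ becomes a double sum over bonds $B\ni e$ and spanning $2$-forests $F$ of $G\setminus B$.

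The crux is then to collapse this double sum to the single sum over separating forests without over- or under-counting, and this is the step I expect to be the main obstacle. The key fact is that a spanning $2$-forest $F$ of $G$ separating $u$ and $v$ determines its bond canonically: the set $B_F$ of edges running between the two components of $F$ is a minimal disconnecting set, hence a bond, it necessarily contains $e$, and $F$ is precisely a spanning-tree pair of the two components of $G\setminus B_F$. Conversely the pair $(B,F)$ maps to $F$. Establishing that $B_F$ is genuinely a bond (its minimality, using that each part of $F$ is connected in $G$) and that it is the \emph{unique} bond contained in the complement $E_G\setminus F$ is what guarantees the double sum telescopes to $\sum_F\alpha_{E_G\setminus F}$ with each monomial produced exactly once. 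Comparing this with the expression obtained for the right-hand side then yields $\beta_G^{(e)}(\alpha)=\alpha_e\Psi_{G\dslash e}(\alpha)$.
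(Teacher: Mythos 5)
Your proposal is correct and follows essentially the same route as the paper: both expand $\beta_G^{(e)}$ as a double sum over bonds $B\ni e$ and spanning-tree pairs of $G\setminus B$, and both collapse it using the key fact that a spanning $2$-forest separating the endpoints of $e$ determines a unique bond (its complement contains exactly one bond, namely the edges joining the two components). The only cosmetic difference is that you index the common set by spanning $2$-forests of $G$ while the paper uses spanning trees of $G\dslash e$, which you yourself identify via the un-contraction bijection.
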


% LEMMA bond / cycle relation
	\begin{lem}\label{lem_bondcycle}
		Let $G$ be a connected graph and $e,e' \in E_G$ two distinct edges. Then
		\al{
			\beta_G^{(e,e')}(\alpha) = -\alpha_e\alpha_{e'} \chi_G^{(e,e')}(\alpha).
		}
	\begin{proof}
		By definition
		\al{
			\beta_G^{(e,e')}(\alpha)  =  \sum_{B \in \mathcal B_G}  \mathrm o_{B}(e,e') \alpha_B\Psi_{G\setminus B}(\alpha).
		}
		 Consider the two connected components $G_1,G_2$ of $G\setminus B$ and remember that $\Psi_{G\setminus B}(\alpha) = \Psi_{G_1}(\alpha)\Psi_{G_2}(\alpha)$. In each component let $v_i, v_i'\in V_{G_i}$ be the two (not necessarily distinct) vertices that $e$ and $e'$ are incident to respectively. By the arguments of the proof of lemma \ref{lem_cyclepol_decomp} we can write
		\al{
			\Psi_{G_i}(\alpha)  =  \sum_{P_i} \Psi_{G_i\dslash P_i}(\alpha).
		}
		where the sum is over all paths $P_i\in \mathcal P_{G_i}^{v_i,v_i'}$ between the two vertices. After contraction of any two paths $P_1$, $P_2$ the edges $e$ and $e'$ start and end in the same two vertices, such that we get another isomorphism on spanning trees.
		\al{
			\mathcal T_{G_1\dslash P_1}\times\mathcal T_{G_2\dslash P_2} \simeq \mathcal T_{(G\setminus B')\dslash C_0}
		}
		where $B'= B \setminus \{e,e'\}$ and $C_0$ is the simple cycle formed by the two paths and $e$, $e'$. Clearly, every simple cycle of $G\setminus B'$ that contains $e$ and $e'$ can be constructed by combining the two edges with any pair of paths $P_1$, $P_2$, i.e.
		\al[*]{
			\{ P_1 \} \oplus \{ P_2 \} \oplus e \oplus  e' \simeq \{ C \in \mathcal C_{G\setminus B'}^{[1]} \ | \ C \supseteq \{e,e'\}\} = \{ C \in \mathcal C_G^{[1]} \ | C \cap B = \{e,e'\} \}.
		}
		On the level of the Kirchhoff polynomial we can therefore write
		\al{
			\Psi_{G\setminus B}(\alpha)  = \sum_{P_1}\sum_{P_2}  \Psi_{G_1\dslash P_1}(\alpha)\Psi_{G_2\dslash P_2}(\alpha) =  \sum_{\substack{C \in \mathcal C_G^{[1]}\\ C\cap B = \{e,e'\}}} \Psi_{(G\setminus B')\dslash C }(\alpha).
		}
		We also see from this construction that for any bond $B$ and simple cycle $C$ that intersect exactly in these two edges the relative orientations of $e,e'$ are related via $\mathrm o_B(e,e') = -\mathrm o_C(e,e')$. Furthermore, contraction and deletion of edge subsets commute as long as the contracted and deleted set do not intersect. We can therefore exchange summation over bonds and cycles as follows:
		\al{\nonumber
			\sum_{B \in \mathcal B_G} \mathrm o_{B}(e,e') \alpha_B \Psi_{G\setminus B}(\alpha)
			& = \sum_{B \in \mathcal B_{G}} \mathrm o_{B}(e,e') \alpha_B \sum_{\substack{C \in \mathcal C_G^{[1]}\\ C\cap B = \{e,e'\}}} \Psi_{(G\setminus B')\dslash C }(\alpha)\\
			& = -\sum_{C \in \mathcal C_G^{[1]}} \mathrm o_C(e,e') \sum_{\substack{B \in \mathcal B_G\\ B \ni e,e'}}\alpha_B \Psi_{(G\dslash C)\setminus B' }(\alpha).
		}
		Note also that in the sum over bonds we can weaken the requirement $C\cap B = \{e,e'\}$ to $e,e' \in B$ since removing edges that have previously been contracted would yield a vanishing Kirchhoff polynomial and thus give no contribution anyway. Let now $G'$ be the graph $G$ in which the edges $C\setminus \{e,e'\}$ have been contracted and $e'$ deleted. Then clearly $G\dslash C = G'\dslash e$ and we can apply lemma \ref{lem_bondpol_decomp} to $G'$ to get
		\al{
			\sum_{B \in \mathcal B_G} \mathrm o_B(e,e') \alpha_B \Psi_{G\setminus B}(\alpha)
			= -\alpha_e\alpha_{e'}\sum_{C \in \mathcal C_G^{[1]}} \mathrm o_C(e,e') \Psi_{G\dslash C}(\alpha)
			= -\alpha_e\alpha_{e'} \chi_G^{(e,e')}(\alpha).
		}
	\end{proof}
	\end{lem}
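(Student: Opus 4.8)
The plan is to mimic the strategy already developed in Lemmas \ref{lem_cyclepol_decomp} and \ref{lem_bondpol_decomp}: rewrite each Kirchhoff polynomial appearing in $\beta_G^{(e,e')}$ as a sum over paths connecting the relevant endpoints, recognise that pairing these paths with $e$ and $e'$ produces simple cycles, and then reorganise the resulting double sum so that the roles of bonds and cycles are interchanged. I would start from the definition
$$\beta_G^{(e,e')}(\alpha) = \sum_{B \in \mathcal B_G} \mathrm o_B(e,e') \alpha_B \Psi_{G\setminus B}(\alpha),$$
and note immediately that only bonds $B$ containing \emph{both} $e$ and $e'$ contribute, since $\mathrm o_B(e,e') = 0$ whenever either edge lies outside $B$.

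For such a bond the complement $G\setminus B$ splits into two components $G_1, G_2$ with $\Psi_{G\setminus B} = \Psi_{G_1}\Psi_{G_2}$, and both $e$ and $e'$ must bridge $G_1$ and $G_2$. Applying the path expansion from the proof of Lemma \ref{lem_cyclepol_decomp} to each component, I would write $\Psi_{G_i}$ as a sum over paths between the endpoints of $e$ and $e'$ that land in $G_i$. Concatenating one path from each component with the edges $e$ and $e'$ produces exactly the simple cycles $C$ of $G$ whose intersection with $B$ is precisely $\{e, e'\}$. This establishes, at the level of Kirchhoff polynomials, the identity $\Psi_{G\setminus B} = \sum_{C \cap B = \{e,e'\}} \Psi_{(G\setminus B')\dslash C}$ with $B' = B\setminus\{e,e'\}$.

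The combinatorial heart of the argument is then the reindexing: I want to replace the sum over bonds $B$ (with an inner sum over the cycles they generate) by a sum over simple cycles $C \ni e, e'$ (with an inner sum over compatible bonds). This rests on the fact that contracting $C\setminus\{e,e'\}$ and deleting $B'$ commute, since the two edge sets are disjoint. I also need to pin down the sign: for a bond $B$ and simple cycle $C$ meeting in exactly $\{e,e'\}$, the relative orientations satisfy $\mathrm o_B(e,e') = -\mathrm o_C(e,e')$, which I would extract by comparing how each edge crosses the bond cut against how it is traversed around the cycle. This single global sign is the source of the minus in the statement, and verifying it uniformly over all compatible pairs $(B,C)$ is the step I expect to be the main obstacle.

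Finally, after swapping the order of summation the inner bond sum becomes, for fixed $C$, essentially a bond polynomial of $G\dslash C$ with the edges $e, e'$ treated as a single bridge. Setting $G'$ to be $G$ with $C\setminus\{e,e'\}$ contracted and $e'$ deleted, so that $G\dslash C = G'\dslash e$, I would invoke Lemma \ref{lem_bondpol_decomp} applied to $G'$ to evaluate this inner sum as $\alpha_e\alpha_{e'}\Psi_{G\dslash C}$. Collecting the overall sign then yields $-\alpha_e\alpha_{e'}\sum_{C}\mathrm o_C(e,e')\Psi_{G\dslash C} = -\alpha_e\alpha_{e'}\chi_G^{(e,e')}$, as claimed. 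The only remaining care is to check that weakening the compatibility condition $C\cap B = \{e,e'\}$ to the simpler $e,e'\in B$ introduces no extra terms, which holds because any bond that removes an already-contracted edge contributes a vanishing Kirchhoff polynomial.
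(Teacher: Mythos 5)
Your proposal follows the paper's own proof essentially step for step: the path expansion of $\Psi_{G_1}\Psi_{G_2}$ via Lemma \ref{lem_cyclepol_decomp}, the identification of path pairs with simple cycles satisfying $C\cap B=\{e,e'\}$, the sign relation $\mathrm o_B(e,e')=-\mathrm o_C(e,e')$, the exchange of the bond and cycle summations, and the final application of Lemma \ref{lem_bondpol_decomp} to the auxiliary graph $G'$ with $G\dslash C = G'\dslash e$. The argument is correct and matches the paper's route, including the closing remark about weakening $C\cap B=\{e,e'\}$ to $e,e'\in B$.
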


	Last but not least we prove another interesting identity involving our cycle polynomials. While one would assume such a rather simple identity must have been known to people working extensively with parametric Feynman integrals in the past, we are not aware of it being used or proved in any published article.

% PROPOSITION Kirchhoff/Cycle
	\begin{prop}
		Let $G$ be a graph. Then
		\al{
			\Psi_G(\alpha) = \frac{1}{h_1(G)}\sum_{C_\in\mathcal C_G^{[1]}}\Psi_C(\alpha)\Psi_{G\dslash C}(\alpha)
		}
		and 
		\al{
	 		\Psi_G(\alpha) = \frac{1}{h_1(G)}\sum_{e\in E_G} \alpha_e \chi_G^{(e)}(\alpha).
		}
	\begin{proof}
		One can quickly see equality of the two right-hand sides by noting that
		\al{
			\Psi_C(\alpha) = \sum_{e\in C} \alpha_e
		}
		and exchanging summations. In order to prove equality with the Kirchhoff polynomial consider first that we have seen in lemma \ref{lem_cyclepol_decomp} that $\alpha_e \chi_G^{(e)}(\alpha)$ is part of the Kirchhoff polynomial, i.e. each monomial on the right-hand sides does appear in $\Psi_G(\alpha)$. On the other hand, consider any monomial of $\Psi_G(\alpha)$. It is a product of parameters $\alpha_e$ corresponding to $h_1(G)$ distinct edges, none of which is a bridge since such an edge would be contained in every spanning tree. Since they are not bridges, each edge $e$ is contained in at least one simple cycle $C_e$ such that the monomial is contained in $\Psi_{C_e}(\alpha)\Psi_{G\dslash C_e}(\alpha)$ and appears at least $h_1(G)$ times on the right-hand side. Furthermore, for two distinct simple cycles $C,C'$ that both contain $e$ the polynomials $\Psi_{G\dslash C}(\alpha)$ and $\Psi_{G\dslash C'}(\alpha)$ share no monomials since otherwise by lemma \ref{lem_cyclepol_decomp} $\Psi_G(\alpha)$ would contain monomials with coefficient not in $\{0,1\}$. Therefore, each monomial of $\Psi_G(\alpha)$ indeed appears exactly $h_1(G)$ times in the sums on the right-hand side.
	\end{proof}
	\end{prop}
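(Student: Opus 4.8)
The plan is to reduce both identities to a single Euler-type homogeneity relation, using only the results already established in this section.

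First I would dispose of the equality between the two right-hand sides, which is purely formal. Since a simple cycle $C$ has first Betti number $1$, each of its spanning trees arises by deleting exactly one edge, so $\Psi_C(\alpha) = \sum_{e\in C}\alpha_e$. Substituting this into $\sum_C \Psi_C(\alpha)\Psi_{G\dslash C}(\alpha)$ and interchanging the order of the (finite) summations over cycles $C$ and edges $e\in C$ yields $\sum_e \alpha_e \sum_{C\ni e}\Psi_{G\dslash C}(\alpha)$, and the inner sum is exactly $\chi_G^{(e)}(\alpha)$ by definition \ref{def_cycle_polynomial}. This establishes the equality of the two sums without any further input.

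Next, to identify the common value with $\Psi_G$, I would work with the second form $\frac{1}{h_1(G)}\sum_e \alpha_e \chi_G^{(e)}(\alpha)$. By lemma \ref{lem_cyclepol_decomp}, $\chi_G^{(e)}(\alpha) = \Psi_{G\setminus e}(\alpha)$ for every non-bridge edge $e$. The contraction-deletion relation $\Psi_{G\setminus e} = \partial\Psi_G/\partial\alpha_e$ then rewrites each summand as $\alpha_e\,\partial\Psi_G/\partial\alpha_e$, and invoking the homogeneity of $\Psi_G$ of degree $h_1(G)$ (noted among the properties above), Euler's identity $\sum_e \alpha_e\,\partial\Psi_G/\partial\alpha_e = h_1(G)\,\Psi_G$ finishes the argument after dividing by $h_1(G)$. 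The one point that needs care—and the only place the statement could go wrong—is the treatment of bridges, since lemma \ref{lem_cyclepol_decomp} applies only to non-bridges. I would check separately that a bridge $e$ contributes zero to both sides: $\chi_G^{(e)} = 0$ because no cycle meets a bridge, and $\partial\Psi_G/\partial\alpha_e = 0$ because a bridge lies in every spanning tree so $\Psi_G$ is independent of $\alpha_e$. Thus bridges may be harmlessly included in the edge sum, and once this is observed no genuine obstacle remains; the whole proposition collapses to Euler's relation.

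An alternative, more combinatorial route avoids Euler's identity and instead counts multiplicities directly: every monomial of $\Psi_G$ is $\alpha_{E_G\setminus T}$ for a spanning tree $T$, hence a squarefree product of $\alpha_e$ over the $h_1(G)$ non-tree edges, each of which is non-bridge and thus lies in some simple cycle. The main obstacle in that approach is showing the multiplicity is exactly $h_1(G)$: one must argue that for two distinct simple cycles $C,C'$ both containing a fixed edge $e$, the polynomials $\Psi_{G\dslash C}$ and $\Psi_{G\dslash C'}$ share no monomial—otherwise lemma \ref{lem_cyclepol_decomp} would force a coefficient outside $\{0,1\}$ in $\Psi_G$. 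I would prefer the Euler argument, which sidesteps this bookkeeping entirely.
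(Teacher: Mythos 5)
Your argument is correct, and for the main step it takes a genuinely different route from the paper. The identification of the two right-hand sides via $\Psi_C(\alpha)=\sum_{e\in C}\alpha_e$ and an exchange of summations is exactly the paper's first step. For the identification with $\Psi_G$, however, the paper does \emph{not} use Euler's relation: it argues monomial by monomial, using lemma \ref{lem_cyclepol_decomp} only to guarantee that every monomial of $\sum_e\alpha_e\chi_G^{(e)}$ occurs in $\Psi_G$, and then counting that each monomial $\alpha_{E_G\setminus T}$ is produced exactly once for each of its $h_1(G)$ non-tree edges (the key point being that $\Psi_{G\dslash C}$ and $\Psi_{G\dslash C'}$ for distinct cycles through a fixed $e$ share no monomial, since their sum is the multiplicity-free $\chi_G^{(e)}=\Psi_{G\setminus e}$). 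This is precisely the ``alternative, more combinatorial route'' you sketch at the end. Your preferred chain
\begin{align*}
\sum_{e}\alpha_e\chi_G^{(e)} \;=\; \sum_{e\ \text{non-bridge}}\alpha_e\Psi_{G\setminus e} \;=\; \sum_{e}\alpha_e\frac{\partial\Psi_G}{\partial\alpha_e} \;=\; h_1(G)\,\Psi_G
\end{align*}
is shorter and sidesteps the multiplicity bookkeeping entirely, and you correctly isolate the one delicate point: bridges must be treated separately, since for a bridge the extended (product) definition of $\Psi_{G\setminus e}$ is nonzero while $\partial\Psi_G/\partial\alpha_e=0$, so one must argue (as you do) that $\chi_G^{(e)}=0=\alpha_e\partial\Psi_G/\partial\alpha_e$ rather than invoke contraction--deletion there. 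What the paper's counting argument buys in exchange is a self-contained combinatorial picture of \emph{which} cycle produces each monomial, which is more in the spirit of the rest of the section; what yours buys is brevity and a clean reduction to homogeneity. Both tacitly assume $h_1(G)\geq 1$, which the statement itself already requires.
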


% PARAMETRIC FEYNMAN INTEGRALS
\section{Parametric Feynman integrals}\label{sec_FI}

Here we first briefly introduce the idea of parametric Feynman integrals for readers unfamiliar with them and in the second subsection state and explain the problem to be solved by our main result in the following section.

% Perturbative QFT and Schwinger parameters
\subsection{Perturbative QFT and Schwinger parameters}

	There are a number of issues that we will not discuss at all here. Firstly, Feynman integrals are generally divergent and have to undergo \textit{renormalisation} (essentially a certain sequence of blow-ups, from a mathematical viewpoint \cite{BrownKreimer_2013_AnglesScales}) to yield sensible finite results. In this article we are only interested in the integrands and do not wish to integrate yet, so this is of no concern to us. In order to simplify further we also only consider massless Euclidean integrands, but our results should not significantly differ in the massive or Minkowskian case\footnote{With masses one would need to include a summand $-\sum_{e\in E_{\Gamma}} m_e^2\alpha_e$ in the exponential that appears in the integrand. While this complicates integration considerably it is merely a multiplicative constant for the purposes of this article. Minkowskian and Euclidean cases can be related via the standard methods of Wick rotation.}. Finally, we work with the Schwinger parametric version of Feynman integrals only and discuss neither momentum or position space, nor how to derive one form of Feynman integrals from the others in much detail.\\

	In perturbative quantum field theory (pQFT) one considers a series expansion 
	\al{
		\mathcal A = \mathcal A^{(0)} + g\mathcal A^{(1)}+ g^2\mathcal A^{(2)}+\dotsm
	}
	of probability amplitudes. The coefficients are given by a sum of so called \textit{Feynman integrals} that can be encoded by a kind of graph, likewise named after Feynman, and the $i$-th coefficient contains the Feynman graphs with first Betti number $i$ and the suitable number and types of external half edges. In the simplest case of a scalar QFT in four space time dimensions the integrand of a Feynman integral in its parametric form is (up to trivial factors) given by\footnote{Note that when using the graph polynomials in formulae for Feynman integrals we will omit the arguments to reduce clutter in the notation.}\cite{Nambu_1957_ParaRep,BognerWeinzierl_2010_FeynmanPolynomials}
	\al{
		I_{\Gamma} \defeq \frac{e^{-\frac{\Phi_{\Gamma}}{\Psi_{\Gamma}}}}{\Psi_{\Gamma}^2},
	}
	where $\Gamma$ is a scalar Feynman graph and $\Psi_{\Gamma}$, $\Phi_{\Gamma}$ are the graph polynomials defined in eqs. (\ref{eq_def_Kirchhoff}, \ref{eq_def_2ndSym}). It can be found from the momentum space integrand by applying the Schwinger trick
	\al{
		\frac{1}{k_e^2}  = \int_0^{\infty} e^{-\alpha_ek_e^2} \mathrm d\alpha_e
	}
	to all propagators (i.e. terms corresponding to edges in the Feynman graph) $1/k_e^2$, integrating all momenta $k_e$ by simple Gaussian integration and collecting the remaining terms into the well-known polynomials via the matrix-tree theorem \cite{BognerWeinzierl_2010_FeynmanPolynomials}. Alternatively one could follow the more abstract geometric derivation in \cite[Sec. 6]{BlochEsnaultKreimer_2006_Motives}.

% The problem with quantum electrodynamics
\subsection{The problem with quantum electrodynamics}
For gauge theories the integrand becomes more complicated. Specifically, in quantum electrodynamics there are two different types of edges - fermions (\fermion) and photons (\photon) -  both of which are more complicated than scalar edges. Moreover, vertices give their own contributions to the integrand:
	\al{
		\photon  &= \frac{ g^{\nu_u\nu_v} + \varepsilon\frac{k^{\nu_u}_ek^{\nu_v}_e}{k_e^2} }{ k_e^2 }, \quad \text{ for } (u,v) = \partial(e) \\[3mm]
		\fermion &= \frac{ \gamma_{\mu_e} k_e^{\mu_e} }{ k_e^2 }\\[3mm]
		\vertexQED   &= \gamma_{\nu_v}
	}
	Thus one has to use a more complicated version of the Schwinger trick (cf. \cite{KreimerSarsSuijlekom_2013_QuantGauge, Sars_2015_PhD}):
	\al{
		\frac{ \gamma_{\mu_e} k_e^{\mu_e} }{ k_e^2 } = \gamma_{\mu_e} k_e^{\mu_e} \int_0^{\infty} e^{-\alpha_ek_e^2} \mathrm d\alpha_e
		\ \to\ -\gamma_{\mu_e} \int_0^{\infty} \frac{1}{2\alpha_e} \frac{\partial}{\partial \xi_{e,\mu_e}} e^{-\alpha_e(k_e+\xi_e)^2} \mathrm d\alpha_e
	}
	Note that we use ``$\to$'' instead of ``$=$'' in the second step. The introduction of the auxiliary momenta $\xi_e$ formally changes the expression and to get an equality one would need to evaluate them as explained in remark \ref{rem_evalSymanzik}. For theoretical computations it is useful to keep them and only replace them when the physical external momenta are explicitly needed (e.g. for renormalisation). Similarly, for a photon propagator one uses
	\al{\nonumber
		\frac{ g^{\nu_u\nu_v} + \varepsilon\frac{k^{\nu_u}_ek^{\nu_v}_e}{k_e^2} }{ k_e^2 }
		&= \int_0^{\infty}  \left( g^{\nu_u\nu_v} + \varepsilon\alpha_ek^{\nu_u}_ek^{\nu_v}_e\right) e^{-\alpha_ek_e^2}\mathrm d\alpha_e\\[3mm]
		&\to\ \int_0^{\infty}  \left(  \frac{2+\varepsilon}{2}g^{\mu_u\mu_v} + \frac{\varepsilon}{4\alpha_e}\frac{\partial^2}{\partial\xi_{e,\mu_u}\partial\xi_{e,\mu_v}} \right) e^{-\alpha_e(k_e+\xi_e)^2}\mathrm d\alpha_e.
	}
	After these replacements all $k_e$ are isolated in the exponential, so the same steps as in the scalar case can be applied to introduce the two graph polynomials. All in all, the integrand for a QED Feynman graph can thus be written as
	\al{
		I_{\Gamma} =  \gamma_{\Gamma}\mathrm D_{\Gamma}\frac{e^{-\frac{\Phi_{\Gamma}}{\Psi_{\Gamma}}}}{\Psi_{\Gamma}^2},\label{eq_QEDint}
	}
	where we used the abbreviation
	\al{
			\mathrm D_{\Gamma} & = \left( \prod_{e \in E^{(f)}_{\Gamma}} \left( -\frac{1}{2\alpha_e}\frac{\partial}{\partial\xi_{e,\mu_e}}  \right)\right) \prod_{\substack{e \in E^{(p)}_{\Gamma}\\ \partial(e)=(u,v)}} \left(  \frac{2+\varepsilon}{2}g^{\mu_u\mu_v} + \frac{\varepsilon}{4\alpha_e}\frac{\partial^2}{\partial\xi_{e,\mu_u}\partial\xi_{e,\mu_v}}   \right).
	}
	Meanwhile, the $\gamma^{\mu}$ appearing in the vertex and fermion terms, are the Dirac gamma matrices, satisfying
	\al{
		\gamma^{\mu}\gamma^{\nu} + \gamma^{\nu}\gamma^{\mu} = 2g^{\mu\nu}\mathds{1}_{4\times4}
	}
	and do not interest us in this article. We collect them into the abbreviation $\gamma_{\Gamma}$ that consists of a certain products of Dirac matrices for each ``open'' fermion line and traces
	\al{
		\prod_{C \in \mathcal C_{\Gamma}^{(f)}} \tr\left( \prod_{\substack{e \in C\\\partial_+(e)=w}} \gamma^{ \mu_w}\gamma^{\mu_e} \right)
	}
	where $\mathcal C_{\Gamma}^{(f)} \defeq \{ C \in \mathcal C_{\Gamma}^{[1]} \ | \ C \subset E_{\Gamma}^{(f)} \}$ contains all fermion cycles, $\partial_+(e) \in V_{\Gamma}$ is the vertex that $e$ is directed towards and the product over the edges in each cycle respects their ordering and goes opposite the direction defined by the fermion flow. For more detail on how to treat these objects see \cite{Kahane_1967_GammaAlgorithm, Chisholm_1972_GammaAlgGen}, or \cite{Golz_2017_Traces}, where we reinterprete, simplify and generalise Kahane's and Chisholm's algorithms.

% MAIN RESULT
\section{Main Result}\label{Sec_MainResult}

% Statement and proof
\subsection{Statement and proof}
	We can now express the result of the above mentioned differential operator $\mathrm D_{\Gamma}$ applied to $\exp(\Phi_{\Gamma}/\Psi_{\Gamma})$ in terms of our cycle polynomials and only need to introduce some more notational conventions.\\

	There is another polynomial that occurs, which is again a combination of certain Kirchhoff and cycle polynomials, namely
	\al{
		\Chi_{\Gamma}^{e,\mu}(\alpha,\xi) \defeq \xi_e^{\mu} \Psi_{\Gamma\dslash e} - \sum_{\substack{e' \in E_{\Gamma}\\ e'\neq e\ \; }} \xi_{e'}^{\mu}\alpha_{e'} \chi_{\Gamma}^{(e,e')}(\alpha).
	}
	Furthermore, the result involves a sum over all \textit{pairings} of a set. Let $S'$ be a set with $|S'|$ even and $\mathcal P_2(S')$ the set of partitions of $S'$ into parts of size 2. Then for any finite set $S$ the set of all pairings is given by 
	\al{
		\mathcal P(S) = \bigcup_{\substack{S'\subseteq S\\ |S'| \text{ even}}}\mathcal P_2(S').
	}
	For any pairing $P\in \mathcal P_2(S')$ we define $S_P = S\setminus S'$. As an example, consider $S=\{1,2,3,4,5\}$. Its even subsets are
	\al[*]{
		&\hspace{45mm} \{\},\\
		\{1,2\}, &\{1,3\}, \{1,4\}, \{1,5\}, \{2,3\}, \{2,4\}, \{2,5\}, \{3,4\}, \{3,5\}, \{4,5\},\\ 
		&\{1,2,3,4\}, \{1,2,3,5\}, \{1,2,4,5\}, \{1,3,4,5\}, \{2,3,4,5\}
	}
	and, for example, 
	\al{
		\mathcal P_2(\{1,2,3,4\}) &= \{ \{ (1,2), (3,4) \}, \{ (1,3), (2,4) \}, \{ (1,4), (2,3) \} \},\\[3mm]
		S_{\{(1,2),(3,4)\}} &= S\setminus \{1,2,3,4\} = \{5\}.
	}
	Some notational complication arises due to the mixing of vertices and edges in the formulae. We circumvent the problem by defining a map that is the identity for fermion edges and assigns to a vertex the unique photon edge incident to it, i.e.
	 $\bar e:  E_{\Gamma}^{(f)}\cup V_{\Gamma} \to E_{\Gamma}$ with
	\al[*]{
		\bar e (e) = e \quad \text{if } e\in E_{\Gamma}^{(f)} \hspace{30mm} \bar e (v) = e' \in \partial^{-1}(v \times V_{\Gamma}) \cap E_{\Gamma}^{(p)} \quad \text{if } v\in V_{\Gamma}		
	}
	Finally, in the following we only need those vertices that have an internal photon edge incident to it, not those whose photon edge is an external half edge. Hence we define the vertex subset
	\al{
		V_{\Gamma}^{(p)} \defeq \{ v \in V_{\Gamma} \ | \ \bar e (v) \neq \emptyset  \} = V_{\Gamma} \setminus \ker \bar e.
	}
	
	\begin{theo}\label{theo_main_numPoly}
		Let $\Gamma$ be a Feynman graph in quantum electrodynamics and $\mathrm D_{\Gamma}$ the differential operator as introduced above. Then
		\al[*]{
			\mathrm D_{\Gamma}e^{-\frac{\Phi_{\Gamma}}{\Psi_{\Gamma}}} = N_{\Gamma}e^{-\frac{\Phi_{\Gamma}}{\Psi_{\Gamma}}}
		}
		where
		\al{
			N_{\Gamma} =  \left(\prod_{e\in E_{\Gamma}^{(p)}}\varepsilon\alpha_e\right) \sum_{P \in\mathcal P(E_{\Gamma}^{(f)} \cup V_{\Gamma}^{(p)})}\left( \prod_{(i,j)\in P} \frac{g^{\mu_{i}\mu_{j}}}{2} \frac{\chi_{\Gamma}^{(\bar e(i), \bar e(j))}}{\Psi_{\Gamma}}  \right) \left(\prod_{k\in (E_{\Gamma}^{(f)} \cup V_{\Gamma}^{(p)})_P} \frac{\Chi_{\Gamma}^{\bar e(k),\mu_{k}}}{\Psi_{\Gamma}} \right).\\\nonumber
		}
	\end{theo}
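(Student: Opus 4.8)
The plan is to exploit that the second Symanzik polynomial $\Phi_{\Gamma}$ is homogeneous of degree two in the auxiliary momenta $\xi_e$, so that $F\defeq\Phi_{\Gamma}/\Psi_{\Gamma}$ is quadratic in $\xi$ and all of its third and higher $\xi$-derivatives vanish. Hence $e^{-F}$ acts like a Gaussian weight, and the multivariate Leibniz (Fa\`a di Bruno) expansion of $\partial_{i_1}\cdots\partial_{i_n}e^{-F}$ truncates to a sum over pairings: each unpaired leg $k$ supplies the one-point factor $-\partial_k F$, each pair $(i,j)$ supplies the two-point factor $-\partial_i\partial_j F$, and the whole sum multiplies $e^{-F}$. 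Identifying the fermion edges and the two endpoints of each internal photon edge as the derivative-carrying ``legs'' makes this sum run exactly over $\mathcal P(E_{\Gamma}^{(f)}\cup V_{\Gamma}^{(p)})$, the index set of the theorem.

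First I would compute the one- and two-point functions. Differentiating the bond form of $\Phi_{\Gamma}$ once and sorting the result into its diagonal and off-diagonal $\xi$-coefficients produces exactly $\beta_{\Gamma}^{(e)}$ and $\beta_{\Gamma}^{(e,e')}$; feeding in lemma \ref{lem_bondpol_decomp} ($\beta_{\Gamma}^{(e)}=\alpha_e\Psi_{\Gamma\dslash e}$) and lemma \ref{lem_bondcycle} ($\beta_{\Gamma}^{(e,e')}=-\alpha_e\alpha_{e'}\chi_{\Gamma}^{(e,e')}$) collapses the derivative into $\tfrac12\partial_{\xi_e^{\mu}}\Phi_{\Gamma}=\alpha_e\Chi_{\Gamma}^{e,\mu}$, so the one-point factor is $-\partial_{\xi_e^{\mu}}F=-2\alpha_e\Chi_{\Gamma}^{e,\mu}/\Psi_{\Gamma}$. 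A second differentiation gives the two relevant two-point types: for distinct edges $-\partial_{\xi_e^{\mu}}\partial_{\xi_{e'}^{\nu}}F=2\alpha_e\alpha_{e'}g^{\mu\nu}\chi_{\Gamma}^{(e,e')}/\Psi_{\Gamma}$, and for two derivatives on the same photon edge $-\partial_{\xi_e^{\mu}}\partial_{\xi_e^{\nu}}F=-2\alpha_e g^{\mu\nu}\Psi_{\Gamma\dslash e}/\Psi_{\Gamma}$.

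Next I would apply $\mathrm D_{\Gamma}$, expanding each photon operator into its metric part $\tfrac{2+\varepsilon}{2}g^{\mu_u\mu_v}$ and its derivative part $\tfrac{\varepsilon}{4\alpha_e}\partial_{\xi_e^{\mu_u}}\partial_{\xi_e^{\mu_v}}$, and reorganize the resulting double sum (over which photon operators are taken in derivative form, and over Wick pairings of the active legs) into a single sum over pairings $P$ of $E_{\Gamma}^{(f)}\cup V_{\Gamma}^{(p)}$. The decisive merge happens when the two endpoints $u,v$ of a photon edge are paired with each other in $P$: the pure metric contribution $\tfrac{2+\varepsilon}{2}g^{\mu_u\mu_v}$ and the self-contraction $\tfrac{\varepsilon}{4\alpha_e}\bigl(-2\alpha_e g^{\mu_u\mu_v}\Psi_{\Gamma\dslash e}/\Psi_{\Gamma}\bigr)$ add, and using $\Psi_{\Gamma}=\Psi_{\Gamma\dslash e}+\alpha_e\chi_{\Gamma}^{(e)}$ (lemma \ref{lem_cyclepol_decomp}, cf.\ \refeq{eq_con-del_bondcycle}) together with the definition \refeq{eq_chi_sameedge} of $\chi^{(e,e)}$ I would show their sum equals $\varepsilon\alpha_e\,\tfrac{g^{\mu_u\mu_v}}{2}\,\chi_{\Gamma}^{(e,e)}/\Psi_{\Gamma}$. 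This is exactly the diagonal pair factor of the theorem, and it is here that the gauge parameter $\varepsilon$ is forced to appear in the announced way.

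Finally I would settle the prefactors. The clean observation is that, irrespective of how the two legs of a photon edge $e$ are distributed in a pairing (both singletons, self-paired, or split between two different pairs), each one- or two-point factor touching $e$ carries one power of $\alpha_e$, so the combination with the prefactor $\tfrac{\varepsilon}{4\alpha_e}$ (or $\tfrac{2+\varepsilon}{2}$) always leaves a net $\varepsilon\alpha_e$; extracting this for every photon edge produces the global $\prod_{e\in E_{\Gamma}^{(p)}}\varepsilon\alpha_e$. The remaining numerical constants ($\tfrac14$ from each photon operator, the $2$'s from the one/two-point functions, and the $-\tfrac1{2\alpha_e}$ from each fermion operator) then reassemble into $\tfrac{g}{2}\chi_{\Gamma}^{(\cdot,\cdot)}/\Psi_{\Gamma}$ on every pair and $\Chi_{\Gamma}^{\bar e(k),\mu_k}/\Psi_{\Gamma}$ on every singleton, with the $\alpha_e$ of each fermion leg cancelling against its prefactor. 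I expect the main obstacle to be precisely this combinatorial collapse: checking that the independent expansion over the metric/derivative split of every photon operator and over all Wick pairings really does assemble --- uniformly across pairing types and with consistent signs (using $\chi_{\Gamma}^{(e,e')}=\chi_{\Gamma}^{(e',e)}$ and the orientation flip $\mathrm o_B=-\mathrm o_C$ already exploited in lemma \ref{lem_bondcycle}) --- into the single sum over $\mathcal P(E_{\Gamma}^{(f)}\cup V_{\Gamma}^{(p)})$, whereas the one- and two-point computations themselves are immediate consequences of the three lemmata.
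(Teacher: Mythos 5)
Your proposal is correct and follows essentially the same route as the paper: you compute $\tfrac{1}{2}\partial_{\xi_e^{\mu}}\Phi_{\Gamma}=\alpha_e\Chi_{\Gamma}^{e,\mu}$ from the bond form of $\Phi_{\Gamma}$ via lemmata \ref{lem_bondpol_decomp} and \ref{lem_bondcycle}, perform the same merge of the metric and self-contraction terms of each photon operator into $\varepsilon\alpha_e\tfrac{g^{\mu_u\mu_v}}{2}\chi_{\Gamma}^{(e,e)}/\Psi_{\Gamma}$ using lemma \ref{lem_cyclepol_decomp} and \refeq{eq_chi_sameedge}, and organize the full expansion into pairings exactly as the paper does with its concluding Leibniz-rule step (your Gaussian/Wick phrasing, resting on the quadraticity of $\Phi_{\Gamma}$ in $\xi$, just makes that last step slightly more explicit).
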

	Note that whenever $i$ and $j$ are the endpoints of the same photon edge one has $\bar e(i) = e = \bar e(j)$ such that one gets $\chi_{\Gamma}^{(e,e)} = \chi_{\Gamma}^{(e)} + 2\Psi_{\Gamma}/\varepsilon\alpha_e$ as defined in \refeq{eq_chi_sameedge}. Thus, in the case of Feynman gauge $\varepsilon \to 0$ the result simplifies to
	\al{
		N_{\Gamma} = \Bigg( \prod_{\substack{e \in E^{(p)}_{\Gamma}\\ \partial(e)=(u,v)}} g^{\mu_u\mu_v}\Bigg)  \sum_{P \in\mathcal P(E_{\Gamma}^{(f)})}\left( \prod_{(i,j)\in P} \frac{g^{\mu_{i}\mu_{j}}}{2} \frac{\chi_{\Gamma}^{(i,j)}}{\Psi_{\Gamma}}  \right) \left(\prod_{k\in (E_{\Gamma}^{(f)})_P} \frac{\Chi_{\Gamma}^{k,\mu_{k}}}{\Psi_{\Gamma}} \right).
	}	
	\begin{proof}[Proof of Theorem \ref{theo_main_numPoly}]
	Consider first the derivative of the second Symanzik polynomial. By inserting the definition from \refeq{eq_def_2ndSym} one finds
		\al{\nonumber
			\frac{1}{2}\frac{\partial}{\partial \xi_{e,\mu}} \Phi_{\Gamma}(\alpha, \xi) 
				& = \frac{1}{2}\frac{\partial}{\partial \xi_{e,\mu}} \sum_{B \in \mathcal B_{\Gamma}} \left(\sum_{e' \in B} \mathrm o_{B}(e') \xi_{e'}  \right)^2 \alpha_B \Psi_{\Gamma\setminus B}(\alpha)\\[3mm]
				& = \sum_{B \in \mathcal B_{\Gamma}}  \mathrm o_{B}(e) \left(\sum_{e' \in B} \mathrm o_{B}(e') \xi_{e'}^{\mu}  \right) \alpha_B\Psi_{\Gamma\setminus B}(\alpha).
		}
		We reorder the terms in the double sum and collect coefficients of $\xi_{e'}^{\mu}$ for all edges $e'$. This gives
		\al{\nonumber
			\frac{1}{2}\frac{\partial}{\partial \xi_{e,\mu}} \Phi_{\Gamma}(\alpha, \xi) 
			&= \sum_{e' \in E_{\Gamma}} \xi_{e'}^{\mu} \sum_{B \in \mathcal B_{\Gamma}}  \mathrm o_{B}(e) \mathrm o_{B}(e') \alpha_B\Psi_{G\setminus B}(\alpha)\\[3mm]\nonumber
			&= \xi_{e}^{\mu} \sum_{B \in \mathcal B_{\Gamma}}  \mathrm o_{B}(e)^2 \alpha_B\Psi_{\Gamma\setminus B}(\alpha) +\sum_{\substack{e' \in E_{\Gamma}\\e'\neq e}} \xi_{e'}^{\mu} \sum_{B \in \mathcal B_{\Gamma}}  \mathrm o_{B}(e,e') \alpha_B\Psi_{\Gamma\setminus B}(\alpha)\\[3mm]
			&= \xi_{e}^{\mu} \beta_{\Gamma}^{(e)}(\alpha)
			+\sum_{\substack{e' \in E_{\Gamma}\\e'\neq e}} \xi_{e'}^{\mu} \beta_{\Gamma}^{(e,e')}(\alpha)
		}
	Thus, by lemma \ref{lem_bondpol_decomp} for the first term and lemma \ref{lem_bondcycle} for the sum one has
	\al{
			\frac{1}{2\alpha_e}\frac{\partial}{\partial \xi_{e,\mu}} \Phi_{\Gamma}(\alpha, \xi) 
			= \xi_e^{\mu} \Psi_{\Gamma\dslash e}(\alpha) - \sum_{\substack{e' \in E_{\Gamma}\\e'\neq e}} \xi_{e'}^{\mu}\alpha_{e'} \chi_{\Gamma}^{(e,e')}(\alpha) = \Chi_{\Gamma}^{e,\mu}(\alpha,\xi).
	}
	With the help of this equation we can quickly compute the derivatives of $e^{-\frac{\Phi_{\Gamma}}{\Psi_{\Gamma}}}$ that appear in $\mathrm D_{\Gamma}$. For a single photon $e\in E_{\Gamma}^{(p)}$ between vertices $u,v\in V_{\Gamma}$ one finds
	\al{\nonumber
		\left(  \frac{2+\varepsilon}{2}g^{\mu_u\mu_v}\right. & \left. + \frac{\varepsilon}{4\alpha_e}\frac{\partial^2}{\partial\xi_{e,\mu_u}\partial\xi_{e,\mu_v}} \right)e^{-\frac{\Phi_{\Gamma}}{\Psi_{\Gamma}}}\\[5mm]\nonumber
		& =  \left( \frac{2+\varepsilon}{2}g^{\mu_u\mu_v} + \varepsilon\alpha_e\left( -\frac{g^{\mu_u\mu_v}\Psi_{\Gamma\dslash e}}{2\alpha_e\Psi_{\Gamma}} + \frac{\Chi_{\Gamma}^{e,\mu_u}\Chi_{\Gamma}^{e,\mu_v}}{\Psi_{\Gamma}^2} \right)\right) e^{-\frac{\Phi_{\Gamma}}{\Psi_{\Gamma}}}
		}
		\al{\nonumber
		\qquad & =  \left(g^{\mu_u\mu_v} \frac{2\Psi_{\Gamma}+\varepsilon(\Psi_{\Gamma}-\Psi_{\Gamma\dslash e})}{2\Psi_{\Gamma}} + \varepsilon\alpha_e\frac{\Chi_{\Gamma}^{e,\mu_u}\Chi_G^{e,\mu_v}}{\Psi_{\Gamma}^2}\right) e^{-\frac{\Phi_{\Gamma}}{\Psi_{\Gamma}}}\\[5mm]\nonumber
		& =  \varepsilon\alpha_e\left( g^{\mu_u\mu_v}\frac{\frac{2\Psi_{\Gamma}}{\varepsilon\alpha_e}+\Psi_{\Gamma\setminus e}}{2\Psi_{\Gamma}} + \frac{\Chi_{\Gamma}^{e,\mu_u}\Chi_{\Gamma}^{e,\mu_v}}{\Psi_{\Gamma}^2}\right) e^{-\frac{\Phi_{\Gamma}}{\Psi_{\Gamma}}}\\[5mm]
		& =  \varepsilon\alpha_e\left( \frac{g^{\mu_u\mu_v}\chi_{\Gamma}^{(e,e)}}{2\Psi_{\Gamma}} + \frac{\Chi_{\Gamma}^{e,\mu_u}\Chi_{\Gamma}^{e,\mu_v}}{\Psi_{\Gamma}^2}\right)e^{-\frac{\Phi_{\Gamma}}{\Psi_{\Gamma}}}.
	}
	The combined derivative for fermion edges $e_1, e_2\in E_{\Gamma}^{(f)}$ is
	\al{\nonumber
		\frac{1}{4\alpha_{e_1}\alpha_{e_2}}\frac{\partial^2}{\partial \xi_{e_1,\mu_1}\partial \xi_{e_2,\mu_2}} e^{-\frac{\Phi_{\Gamma}}{\Psi_{\Gamma}}} 
		& = -\frac{1}{2\alpha_{e_1}}\frac{\partial}{\partial \xi_{e_1,\mu_1}}\left( \frac{\Chi_{\Gamma}^{e_2,\mu_2}}{\Psi_{\Gamma}}  e^{-\frac{\Phi_{\Gamma}}{\Psi_{\Gamma}}}\right)\\[3mm]
		& = \left( \frac{g^{\mu_1\mu_2}\chi_{\Gamma}^{(e_1,e_2)}}{2\Psi_{\Gamma}} + \frac{\Chi_{\Gamma}^{e_1,\mu_1}\Chi_{\Gamma}^{e_2,\mu_2}}{\Psi_{\Gamma}^2} \right)e^{-\frac{\Phi_{\Gamma}}{\Psi_{\Gamma}}}.
	}
	By the Leibniz rule the desired result is then just a sum over certain combinations of these basic results, exactly given by the pairings introduced above.\\
	\end{proof}

% Examples
\subsection{Examples}
	Consider again the graphs $\Gamma_1$ and $\Gamma_2$ from \reffig{Example_02_FeynmanGraphs} for which we computed the Kirchhoff and second Symanzik polynomials in example \ref{exam_classGraphPol}. We will use Feynman gauge for $\Gamma_1$ and general gauge for the smaller $\Gamma_2$. For the sake of simpler notation we will in this section use $\nu_i$ as space-time index corresponding to a vertex $v_i$ and $\mu_i$ for those corresponding to edges $e_i$.
	% Example 1
	\begin{exam}\label{ex_main_1}
		$\Gamma_1$ has three simple cycles		
		\al[*]{
			C_1 = \{ e_1, e_2, e_5 \} \quad C_2 = \{ e_1, e_3, e_4 \} \quad C_3 = \{ e_2, e_3, e_4, e_5 \} 
		}
		which give the polynomials
		\al[*]{
			\Psi_{\Gamma_1 \dslash C_1} = \alpha_3 + \alpha_4, \qquad
			\Psi_{\Gamma_1 \dslash C_2} = \alpha_2 + \alpha_5, \qquad
			\Psi_{\Gamma_1 \dslash C_3} = \alpha_1.\quad
		}
		In Feynman gauge we only need pairings of fermion edges $\{ e_2, e_3, e_4, e_5 \}$. There are $\binom{4}{2} = 6$ with one pair, $3!! = 3$ with two pairs and the empty pairing. The relevant cycle polynomials are
		\al[*]{
			\chi_{\Gamma_1}^{(e_2,e_3)} & = \Psi_{\Gamma_1 \dslash C_3} = \alpha_1\qquad
			\chi_{\Gamma_1}^{(e_2,e_4)} = \Psi_{\Gamma_1 \dslash C_3} = \alpha_1\\
			\chi_{\Gamma_1}^{(e_3,e_5)} & = \Psi_{\Gamma_1 \dslash C_3} = \alpha_1 \qquad
			\chi_{\Gamma_1}^{(e_4,e_5)} = \Psi_{\Gamma_1 \dslash C_3} = \alpha_1\\
			\chi_{\Gamma_1}^{(e_2,e_5)} & = \Psi_{\Gamma_1 \dslash C_1} + \Psi_{\Gamma_1 \dslash C_3} = \alpha_1 + \alpha_3 + \alpha_4\\
			\chi_{\Gamma_1}^{(e_3,e_4)} & = \Psi_{\Gamma_1 \dslash C_2} + \Psi_{\Gamma_1 \dslash C_3} = \alpha_1 + \alpha_2 + \alpha_5
		}
		and one finds $N_{\Gamma_1} = N_{\Gamma_1}^{(0)} + N_{\Gamma_1}^{(1)} + N_{\Gamma_1}^{(2)}$, where
		\al{
			N_{\Gamma_1}^{(0)} = \frac{g^{\nu_2\nu_4}}{4\Psi_{\Gamma_1}^2}
			\left( (g^{\mu_2\mu_3}g^{\mu_4\mu_5} + g^{\mu_2\mu_4}g^{\mu_3\mu_5})\alpha_1^2 +  g^{\mu_2\mu_5}g^{\mu_3\mu_4}(\alpha_1 + \alpha_3 + \alpha_4)(\alpha_1 + \alpha_2 + \alpha_5)\right),
		}
		$N_{\Gamma_1}^{(1)}$ consists of six terms of the form
		\al{
			\frac{g^{\nu_2\nu_4}}{2\Psi_{\Gamma_1}^3} g^{\mu_2\mu_3}\chi_{\Gamma_1}^{(e_2,e_3)}\Chi_{\Gamma_1}^{e_4,\mu_4}\Chi_{\Gamma_1}^{e_5,\mu_5},
		}
		where e.g.
		\al{\nonumber
			\Chi_{\Gamma_1}^{e_5,\mu_5}& = \xi_5^{\mu_5}\big((\alpha_1+\alpha_2)(\alpha_3+\alpha_4) + \alpha_1\alpha_2\big)\\
			&\quad - \xi_2^{\mu_5}\alpha_2(\alpha_1 + \alpha_3 + \alpha_4) - \xi_3^{\mu_5}\alpha_1\alpha_3 - \xi_4^{\mu_5}\alpha_1\alpha_4,
		}
		and 
		\al{
			N_{\Gamma_1}^{(2)} = \frac{g^{\nu_2\nu_4}}{\Psi_{\Gamma_1}^4} \Chi_{\Gamma_1}^{e_2,\mu_2}\Chi_{\Gamma_1}^{e_3,\mu_3}\Chi_{\Gamma_1}^{e_4,\mu_4}\Chi_{\Gamma_1}^{e_5,\mu_5}.
		}
		Evaluating $\xi$ to get physical momenta simplifies matters. Choose $\{e_2,e_3\}$ as the momentum path. Then
		\al[*]{
			\Chi_{\Gamma_1}^{e_2,\mu_2} &= q^{\mu_2}\big((\alpha_1+\alpha_5)(\alpha_3+\alpha_4) + \alpha_1\alpha_5 -\alpha_1\alpha_3\big)\\[1mm]
			\Chi_{\Gamma_1}^{e_3,\mu_3} &= q^{\mu_3}\big((\alpha_1+\alpha_4)(\alpha_2+\alpha_5) + \alpha_1\alpha_4 -\alpha_1\alpha_2\big)\\[1mm]
			\Chi_{\Gamma_1}^{e_4,\mu_4} &= -q^{\mu_4}\big(  \alpha_1\alpha_2 + \alpha_3(\alpha_1 + \alpha_2 + \alpha_5)\big)\\[1mm]
			\Chi_{\Gamma_1}^{e_5,\mu_5} &= -q^{\mu_5}\big(  \alpha_2(\alpha_1 + \alpha_3 + \alpha_4) + \alpha_1\alpha_3\big)
		}
	\end{exam}
	
	% Example 2
	\begin{exam}\label{ex_main_2}
		$\Gamma_2$ contains only a single simple cycle that encompasses the entire graph, except for the external half-edges. Hence $\Psi_{\Gamma_2 \dslash C} = 1$ and the three cycle polynomials $\chi_{\Gamma_2}^{(e_1,e_2)}$, $\chi_{\Gamma_2}^{(e_1,e_3)}$ and $\chi_{\Gamma_2}^{(e_2,e_3)}$ are also all just $1$. Beyond that we have
		\al[*]{
			\Chi_{\Gamma_2}^{e_1,\nu_i}& = \xi_1^{\nu_i}(\alpha_2+\alpha_3) - \xi_2^{\nu_i}\alpha_2- \xi_3^{\nu_i}\alpha_3
			\ \, \quad \to \quad  q_2^{\nu_i}(\alpha_2+\alpha_3) - q_1^{\nu_i}\alpha_3  \qquad i=2,3\\[1mm]
			\Chi_{\Gamma_2}^{e_2,\mu_2}& = \xi_2^{\mu_2}(\alpha_1+\alpha_3) - \xi_1^{\mu_2}\alpha_1- \xi_3^{\mu_2}\alpha_3
			\quad \to \quad  -q_2^{\mu_2}\alpha_1- q_1^{\mu_2}\alpha_3  \\[1mm]
			\Chi_{\Gamma_2}^{e_3,\mu_3}& = \xi_3^{\mu_3}(\alpha_1+\alpha_2) - \xi_1^{\mu_3}\alpha_1- \xi_2^{\mu_3}\alpha_2
			\quad \to \quad  q_1^{\mu_3}(\alpha_1+\alpha_2) - q_2^{\mu_3}\alpha_1 \\[3mm]
			&\hspace{4cm}	\chi_{\Gamma_2}^{(e_1,e_1)} = 1 + \frac{2\Psi_{\Gamma_2}}{\varepsilon\alpha_1}
		}
		In this example we again have to consider pairings of a set of four objects, but this time we include vertices and have to remember $\bar e(v_2) = e_1 = \bar e(v_3)$. We again have $N_{\Gamma_2} = N_{\Gamma_2}^{(0)} + N_{\Gamma_2}^{(1)} + N_{\Gamma_2}^{(2)}$, where the summands are constructed from the polynomials above in the same way as in the last example. In order to illustrate the difference between Feynman gauge and general gauge consider the first term:
		\al{\nonumber
			N_{\Gamma_2}^{(0)} &= \frac{\varepsilon\alpha_1}{4\Psi_{\Gamma_2}^2}\left( g^{\mu_2\mu_3}g^{\nu_2\nu_3}\left(1 + \frac{2\Psi_{\Gamma_2}}{\varepsilon\alpha_1}\right) + g^{\mu_2\nu_2}g^{\mu_3\nu_3} + g^{\mu_2\nu_3}g^{\mu_3\nu_2}\right)\\[3mm]
			& = \frac{g^{\mu_2\mu_3}g^{\nu_2\nu_3}}{2\Psi_{\Gamma_2}} 
			+ \varepsilon\alpha_1 \frac{g^{\mu_2\mu_3}g^{\nu_2\nu_3} + g^{\mu_2\nu_2}g^{\mu_3\nu_3} + g^{\mu_2\nu_3}g^{\mu_3\nu_2}}{4\Psi_{\Gamma_2}^2}
		}
		
	\end{exam}

% CONCLUSION
\section*{Conclusion}
	With our main theorem we have found an explicit combinatorial expression for the rational function that differentiates the parametric integrand of quantum electrodynamics from the scalar case. This expression contains a new graph polynomial based on cycle subgraphs whose properties we studied. The occurrence of this polynomial and especially the interesting properties discussed in the lemmata used to prove the main theorem suggest a deeper importance of cycle subgraphs in gauge theory, which is in accordance with \cite{KreimerSarsSuijlekom_2013_QuantGauge}, where a combination of graph and cycle homology was used to derive the parametric integrand for general gauge theories. Since the properties of the cycle polynomial and even its appearance in the derivatives of $\Phi_{\Gamma}(\alpha, \xi)$ is independent of the specific case of QED it should be possible to generalise our results to general gauge theories by replacing the special case of $\mathrm D_{\Gamma}$ with the general Corolla differential \cite{KreimerYeats_2012_Corolla, KreimerSarsSuijlekom_2013_QuantGauge, Prinz_2016_Corolla}. The cycle polynomial and generation of the integrand has been implemented and checked via computer algebra for all photon, fermion and vertex functions up to two loops and some three-loop photon functions. While the computational advantage of our result compared to naive differentiation is at this point minor at best, it is an important step towards a simplification of gauge theory integrands. The combinatoric understanding of the integrand we gained in this article can be used in conjunction with the results of \cite{Golz_2017_Traces} to yield an integrand in which all Dirac matrices and metric tensors have been contracted. While the precise structure of the integrand after contraction still stands to be investigated in future work we believe that it, too, should have a relatively simple combinatorial interpretation along the lines of the work presented in this article, which would be an inestimable improvement from both practical and theoretical viewpoints.

% References
	\bibliographystyle{plainurl}
	\bibliography{userBib}

\begin{thebibliography}{10}

\bibitem{AluffiMarcolli_2010_DetHyp}
Paolo Aluffi and Matilde Marcolli.
\newblock {Parametric Feynman integrals and determinant hypersurfaces}.
\newblock {\em Advances in Theoretical and Mathematical Physics},
  14(3):911--964, 2010.
\newblock \href{https://arxiv.org/abs/0901.2107}{arXiv:0901.2107} [math.AG].
\newblock \href {http://dx.doi.org/10.4310/ATMP.2010.v14.n3.a5}
  {\path{doi:10.4310/ATMP.2010.v14.n3.a5}}.

\bibitem{BergereZuber_1974_ParaRenorm}
M.C. Berg\`{e}re and J.B. Zuber.
\newblock {Renormalization of Feynman amplitudes and parametric integral
  representation}.
\newblock {\em Communications in Mathematical Physics}, 35:113--140, 1974.
\newblock \href {http://dx.doi.org/10.1007/BF01646611}
  {\path{doi:10.1007/BF01646611}}.

\bibitem{BCDYeats_2015_ForbMinor}
Samson Black, Iain Crump, Matt DeVos, and Karen Yeats.
\newblock {Forbidden Minors for Graphs with No First Obstruction to Parametric
  Feynman Integration}.
\newblock {\em Discrete Math.}, 338(3):9--35, 2015.
\newblock \href{https://arxiv.org/abs/1310.5788}{arXiv:1310.5788} [math.CO].
\newblock \href {http://dx.doi.org/10.1016/j.disc.2014.10.001}
  {\path{doi:10.1016/j.disc.2014.10.001}}.

\bibitem{BlochEsnaultKreimer_2006_Motives}
Spencer Bloch, H\'{e}l\`{e}ne Esnault, and Dirk Kreimer.
\newblock On motives associated to graph polynomials.
\newblock {\em Communications in Mathematical Physics}, 267:181--225, 2006.
\newblock \href{http://arxiv.org/abs/math/0510011}{arXiv:0510011v1} [math.AG].
\newblock \href {http://dx.doi.org/10.1007/s00220-006-0040-2}
  {\path{doi:10.1007/s00220-006-0040-2}}.

\bibitem{BognerWeinzierl_2010_FeynmanPolynomials}
Christian Bogner and Stefan Weinzierl.
\newblock Feynman graph polynomials.
\newblock {\em International Journal of Modern Physics A}, 25:2585--2618, 2010.
\newblock \href{http://arxiv.org/abs/1002.3458v3}{arXiv:1002.3458v3} [hep-ph].
\newblock \href {http://dx.doi.org/10.1142/S0217751X10049438}
  {\path{doi:10.1142/S0217751X10049438}}.

\bibitem{BroadhurstDelbourgoKreimer_1996_Unknotting}
David~J. Broadhurst, Robert Delbourgo, and Dirk Kreimer.
\newblock Unknotting the polarized vacuum of quenched {QED}.
\newblock {\em Physics Letters}, B366:421--428, 1996.
\newblock \href{http://arxiv.org/abs/hep-ph/9509296}{ arXiv:hep-ph/9509296}.
\newblock \href {http://dx.doi.org/10.1016/0370-2693(95)01343-1}
  {\path{doi:10.1016/0370-2693(95)01343-1}}.

\bibitem{Brown_2009_Massless}
Francis Brown.
\newblock The massless higher-loop two-point function.
\newblock {\em Communications in Mathematical Physics}, 287:925--958, 2009.
\newblock \href{http://arxiv.org/abs/0804.1660v1}{ arXiv:0804.1660v1 }
  [math.AG].
\newblock \href {http://dx.doi.org/10.1007/s00220-009-0740-5}
  {\path{doi:10.1007/s00220-009-0740-5}}.

\bibitem{Brown_2010_Periods}
Francis Brown.
\newblock On the periods of some {F}eynman integrals.
\newblock 2010.
\newblock \href{http://arxiv.org/abs/0910.0114}{ arXiv:0910.0114v2 } [math.AG].

\bibitem{BrownKreimer_2013_AnglesScales}
Francis Brown and Dirk Kreimer.
\newblock Angles, scales and parametric renormalization.
\newblock {\em Letters in Mathematical Physics}, 103(9):933--1007, 2013.
\newblock \href{http://arxiv.org/abs/1112.1180}{ arXiv:1112.1180v1 } [hep-th].
\newblock \href {http://dx.doi.org/10.1007/s11005-013-0625-6}
  {\path{doi:10.1007/s11005-013-0625-6}}.

\bibitem{BrownSchnetz_2012_K3}
Francis Brown and Oliver Schetz.
\newblock {A K3 in $\phi^4$}.
\newblock {\em Duke Mathematical Journal}, 161(10):1817--1862, 2012.
\newblock \href{http://arxiv.org/abs/1006.4064}{ arXiv:1006.4064 } [math.AG].

\bibitem{BrownSchnetzYeats_2014_C2}
Francis Brown, Oliver Schnetz, and Karen Yeats.
\newblock {Properties of $c_2$ invariants of Feynman graphs}.
\newblock {\em Advances in Theoretical and Mathematical Physics},
  18(2):323--362, 2014.
\newblock \href{https://arxiv.org/abs/1203.0188}{ arXiv:1203.0188 } [math.AG].
\newblock \href {http://dx.doi.org/10.4310/ATMP.2014.v18.n2.a2}
  {\path{doi:10.4310/ATMP.2014.v18.n2.a2}}.

\bibitem{Chisholm_1972_GammaAlgGen}
J.S.R. Chisholm.
\newblock Generalisation of the {K}ahane algorithm for scalar products of
  {$\gamma$} matrices.
\newblock {\em Computer Physics Communications}, 4(2):205--207, 1972.
\newblock \href {http://dx.doi.org/10.1016/0010-4655(72)90009-4}
  {\path{doi:10.1016/0010-4655(72)90009-4}}.

\bibitem{CvitanovicKinoshita_1974_FRPS}
Predrag Cvitanovic and T.~Kinoshita.
\newblock {Feynman-Dyson rules in parametric space}.
\newblock {\em Physical Review D}, 10(12):3978--3991, 1974.
\newblock \href {http://dx.doi.org/10.1103/PhysRevD.10.3978}
  {\path{doi:10.1103/PhysRevD.10.3978}}.

\bibitem{Golz_2017_Traces}
Marcel Golz.
\newblock {Traces and contraction of Dirac matrices via chord diagrams}.
\newblock 2017.
\newblock in preparation.

\bibitem{GKLS_1991_QED}
S.G. Gorishny, A.L. Kataev, S.A. Larin, and L.R. Surguladze.
\newblock The analytic four-loop corrections to the {QED} $\beta$-function in
  the {MS} scheme and to the {QED} $\psi$-function. {Total} reevaluation.
\newblock {\em Physics Letters B}, 256(1):81--86, 1991.
\newblock \href {http://dx.doi.org/10.1016/0370-2693(91)90222-C}
  {\path{doi:10.1016/0370-2693(91)90222-C}}.

\bibitem{Kahane_1967_GammaAlgorithm}
Joseph Kahane.
\newblock Algorithm for {R}educing {C}ontracted {P}roducts of {$\gamma$}
  {M}atrices.
\newblock {\em Journal of Mathematical Physics}, 9(10):1732--1738, 1967.
\newblock \href {http://dx.doi.org/10.1063/1.1664506}
  {\path{doi:10.1063/1.1664506}}.

\bibitem{Kirchhoff_1847_Kirchhoff}
Gustav Kirchhoff.
\newblock Ueber die {A}ufl\"{o}sung der {G}leichungen, auf welche man bei der
  {U}ntersuchung der linearen {V}ertheilung galvanischer {S}tr\"{o}me
  gef\"{u}hrt wird.
\newblock {\em Annalen der Physik und Chemie}, 72:497--508, 1847.

\bibitem{KompanietsPanzer_2016}
Mikhail Kompaniets and Erik Panzer.
\newblock {Renormalization group functions of $\phi^4$ theory in the MS-scheme
  to six loops}.
\newblock {\em Proceedings of Science}, LL2016:038, 2016.
\newblock \href{https://arxiv.org/abs/1606.09210}{ arXiv:1606.09210 } [hep-th].

\bibitem{KreimerSarsSuijlekom_2013_QuantGauge}
Dirk Kreimer, Walter van Suijlekom, and Matthias Sars.
\newblock Quantization of gauge fields, graph polynomials and graph cohomology.
\newblock {\em Annals of Physics}, 336:180--222, 2013.
\newblock \href{http://arxiv.org/abs/1208.6477}{ arXiv:1208.6477 } [hep-th].
\newblock \href {http://dx.doi.org/10.1016/j.aop.2013.04.019}
  {\path{doi:10.1016/j.aop.2013.04.019}}.

\bibitem{KreimerYeats_2012_Corolla}
Dirk Kreimer and Karen Yeats.
\newblock {Properties of the Corolla Polynomial of a 3-regular Graph}.
\newblock {\em Electronic Journal of Combinatorics}, 20(1), 2013.
\newblock \href{https://arxiv.org/abs/1207.5460}{ arXiv:1207.5460 } [math.CO].

\bibitem{Nakanishi_1971_GraphFeynman}
Noboru Nakanishi.
\newblock {\em Graph theory and Feynman integrals}, volume~11 of {\em
  Mathematics and Its Applications}.
\newblock Gordon and Breach, New York (N. Y.) ; Paris ; London, 1971.

\bibitem{Nambu_1957_ParaRep}
Yoichiro Nambu.
\newblock {Parametric representation of general Green's functions}.
\newblock {\em Il Nuovo Cimento (1955-1965)}, 6(5):1064--1083, 1957.
\newblock \href {http://dx.doi.org/10.1007/BF02747390}
  {\path{doi:10.1007/BF02747390}}.

\bibitem{Panzer_2014_Algorithms}
Erik Panzer.
\newblock Algorithms for the symbolic integration of hyperlogarithms with
  applications to {F}eynman integrals.
\newblock {\em Computer Physics Communications}, 188:148--166, 2014.
\newblock \href{http://arxiv.org/abs/1403.3385v1}{arXiv:1403.3385v1} [hep-ph].
\newblock \href {http://dx.doi.org/10.1016/j.cpc.2014.10.019}
  {\path{doi:10.1016/j.cpc.2014.10.019}}.

\bibitem{Panzer_2014_ManyScales}
Erik Panzer.
\newblock {On hyperlogarithms and Feynman integrals with divergences and many
  scales}.
\newblock {\em Journal of High Energy Physics}, 2014(3):71--96, 2014.
\newblock \href{https://arxiv.org/abs/1401.4361}{arXiv:1401.4361} [hep-th].
\newblock \href {http://dx.doi.org/10.1007/JHEP03(2014)071}
  {\path{doi:10.1007/JHEP03(2014)071}}.

\bibitem{Prinz_2016_Corolla}
David Prinz.
\newblock {The Corolla Polynomial for Spontaneously Broken Gauge Theories}.
\newblock {\em Mathematical Physics, Analysis and Geometry}, 19(3):18, 2016.
\newblock \href{https://arxiv.org/abs/1603.03321}{arXiv:1603.03321} [math-ph].
\newblock \href {http://dx.doi.org/10.1007/s11040-016-9222-0}
  {\path{doi:10.1007/s11040-016-9222-0}}.

\bibitem{Rosner_1966_QED}
J.~Rosner.
\newblock Sixth-order contribution to ${Z}_{3}$ in finite quantum
  electrodynamics.
\newblock {\em Physical Review Letters}, 17:1190--1192, 1966.
\newblock \href {http://dx.doi.org/10.1103/PhysRevLett.17.1190}
  {\path{doi:10.1103/PhysRevLett.17.1190}}.

\bibitem{Sars_2015_PhD}
Matthias Sars.
\newblock {\em
  \href{http://www2.mathematik.hu-berlin.de/~kreimer/wp-content/uploads/SarsThesis.pdf}{Parametric
  Representation of Feynman Amplitudes in Gauge Theories}}.
\newblock PhD thesis, Humboldt-Universit\"{a}t zu Berlin, 2015.

\end{thebibliography}

\end{document}